\DeclarePairedDelimiterX{\ket}[1]{\lvert}{\rangle}{#1}
\DeclarePairedDelimiterX{\bra}[1]{\langle}{\rvert}{#1}
\newcommand{\ketbra}[2]{\left\vert #1 \right\rangle \! \left\langle #2 \right\vert}
\newcommand{\tr}{\operatorname{Tr}} 
\newcommand{\sep}{\text{Sep}} 
\newcommand{\sepm}{\mathbf{SepM}} 
\newcommand{\cnot}{\text{CNOT}} 
\mathchardef\mhyphen="2D
\newtheorem{theorem}{Theorem}
\newtheorem{proposition}[theorem]{Proposition}
\newtheorem{lemma}[theorem]{lemma}
\theoremstyle{remark}
\begin{document}
	\title{Relation between Quantum Coherence and Quantum Entanglement in Quantum Measurements}
	\author{Ho-Joon Kim}\email{eneration@gmail.com}
	\affiliation{Department of Mathematics and Research Institute for Basic Sciences, Kyung Hee University, Seoul 02447, Korea}
	
	\author{Soojoon Lee}\email{level@khu.ac.kr}
	\affiliation{Department of Mathematics and Research Institute for Basic Sciences, Kyung Hee University, Seoul 02447, Korea}
    
	\begin{abstract}
		Quantum measurement is a class of quantum channels that sends quantum states to classical states. We set up resource theories of quantum coherence and quantum entanglement for quantum measurements and find relations between them. For this, we conceive a relative entropy type quantity to account for the quantum resources of quantum measurements. The quantum coherence of a quantum measurement can be converted into the entanglement in a bipartite quantum measurement through coherence non-generating transformations. Conversely, a quantum entanglement monotone of quantum measurements induces a quantum coherence monotone of quantum measurements. Our results confirm that the understanding on the link between quantum coherence and quantum entanglement is valid even for quantum measurements which do not generate any quantum resource.
	\end{abstract}

	\maketitle

\section{Introduction}
Quantum superposition or quantum coherence is at the heart of quantum theory; it is indispensable to describe quantum features such as the double-slit experiment. Distinct from the coherence of classical lights, quantum coherence of optical fields has been the main subject of quantum optics since the foundational works \cite{glauber1963QuantumTheoryOptical,glauber1963CoherentIncoherentStates,sudarshan1963EquivalenceSemiclassicalQuantum}. Quantum information science provided rigorous concepts and tools to explore quantum coherence of finite-dimensional systems as well as optical modes in the name of the quantum resource theory \cite{streltsov2017QuantumCoherenceResource,chitambar2019QuantumResourceTheories,kdwu2021ExperimentalProgressQuantum}. Quantum coherence has been studied for a fixed basis \cite{levi2014QuantitativeTheoryCoherent,baumgratz2014QuantifyingCoherence,winter2016OperationalResourceCoherence}, for subspaces \cite{aberg2006QuantifyingSuperposition}, for a set of linearly independent states \cite{killoran2016ConvertingNonclassicalityEntanglement,theurer2017ResourceTheorySuperposition}, or concerning an enlarged space for a quantum measurement \cite{bischof2019ResourceTheoryCoherence,bischof2021QuantifyingCoherenceRespect}. Quantum coherence is also investigated in the continuous variable systems related to the nonclassicality of light \cite{killoran2016ConvertingNonclassicalityEntanglement,kctan2017QuantifyingCoherenceCoherent}.

Quantum entanglement, the typical quantum correlation \cite{schroedinger1935GegenwaertigeSituationQuantenmechanik,schroedinger1936ProbabilityRelationsSeparated,horodecki2009QuantumEntanglement,piani2011AllNonclassicalCorrelations,gharibian2011CharacterizingQuantumnessEntanglement,adesso2016MeasuresApplicationsQuantum}, is known to have a close relation to quantum coherence even from the early works in quantum optics; the nonclassicality of light was shown to be a source of quantum entanglement \cite{mskim2002EntanglementBeamSplitter,xbwang2002PropertiesBeamsplitterEntangler}; the relation between nonclassicality of lights and entanglement is further established \cite{asboth2005ComputableMeasureNonclassicality,vogel2014UnifiedQuantificationNonclassicality,hertz2020RelatingEntanglementOptical}. For finite-dimensional systems, quantitative relations between quantum coherence and quantum correlations were established \cite{aberg2006QuantifyingSuperposition,streltsov2011LinkingQuantumDiscord,jma2016ConvertingCoherenceQuantum,hzhu2017OperationalMappingCoherenceEntanglement,hzhu2017CoherenceEntanglementMeasures,regula2018ConvertingMultilevelNonclassicality,lhren2021ResourceConversionOperational}. In particular, it was confirmed that the quantum coherence of a quantum state could be converted to quantum entanglement without supplying further quantum coherence \cite{streltsov2015MeasuringCoherenceEntanglement}, which also implied that a quantum entanglement monotone could induce a quantum coherence monotone for quantum states.

Quantum dynamics enter the scene by changing quantum resources either in quantum states or in other quantum dynamics \cite{nielsen2003QuantumDynamicsPhysical,chiribella2008QuantumCircuitArchitecture,chiribella2008TransformingQuantumOperations,gour2019ComparisonQuantumChannels,saxena2020DynamicalResourceTheory,chen2020EntanglementbreakingSuperchannels,hjkim2021OneShotManipulationEntanglement,regula2021OneShotManipulationDynamical,xyuan2020OneshotDynamicalResource,kfang2020NogoTheoremsQuantum,regula2021FundamentalLimitationsDistillation,puchala2021DephasingSuperchannels}. The intimate relation between quantum coherence and quantum entanglement continues to hold for quantum dynamics: specifically, it was shown that a quantum channel's quantum coherence generating power converts to the quantum entanglement generating power without additional quantum coherence in the process \cite{theurer2020QuantifyingDynamicalCoherence}. In fact, quantum channels have various aspects concerning quantum resources other than resource generating powers; a quantum channel can increase, decrease, erase, or preserve the quantum resources of a quantum state \cite{zanardi2000EntanglingPowerQuantum,horodecki2003EntanglementBreakingChannels,mani2015CoheringDecoheringPower,zanardi2017CoherencegeneratingPowerQuantum,zwliu2017ResourceDestroyingMaps,ben_dana2017ResourceCoherenceStates,kbu2017CoheringPowerQuantum,diaz2018UsingReusingCoherence,rosset2018ResourceTheoryQuantum,zwliu2019ResourceTheoriesQuantum,bauml2019ResourceTheoryEntanglement,lli2020QuantifyingResourceContent,cyhsieh2020ResourcePreservability,xyuan2021UniversalOperationalBenchmarking,takahashi2021CreatingDestroyingCoherence}. Does quantum coherence of a quantum channel convert to quantum entanglement in all such aspects as in the case of quantum states?

To shed light on this problem, we focus on quantum measurements that send quantum states to classical states as quantum channels. The classical output of quantum measurements implies that they can generate neither quantum coherence nor quantum entanglement. However, it is known that entangled quantum measurements are useful to certify quantum resources \cite{branciard2010CharacterizingNonlocalCorrelations,bowles2018DeviceIndependentEntanglementCertification,renou2018SelfTestingEntangledMeasurements,sekatski2018CertifyingBuildingBlocks,bancal2018NoiseResistantDeviceIndependentCertification}. This paper investigates quantum coherence and quantum entanglement of quantum measurements using resource theory framework. We find that, despite the classical outputs, quantum resources of quantum measurements can be formulated without relying upon resources of quantum states, and yet they share analogous intimate relations. Understanding the quantum resources of quantum dynamics would enable us to design more effective algorithms and efficient quantum dynamics for the implementation of a quantum computer in the NISQ era \cite{preskill2018QuantumComputingNISQ}.

\section{Resource theory of quantum measurements}
We briefly review quantum measurements and their transformations, and the resource theory of them with respect to the quantum coherence and the quantum entanglement.

A quantum measurement $ \mathcal{M}_{A} $ on a system $ A $ with $ n $ outcomes is often described by a positive operator valued measure (POVM) $ \mathcal{M}_{A}=\{M_{x}\ge 0 : \sum_{x=0}^{n-1}M_{x}=I_{A}, x=0,\dots,n-1 \} $, which, by Born's rule, determines the outcome statistics of an input state $ \rho_{A} $ as $ \{ p_{x} = \tr_{A}\rho_{A}M_{x}: x=0,\dots,n-1 \}  $. The quantum measurement $ \mathcal{M}_{A} $ is equivalently described as a quantum-classical channel that sends a quantum state to a classical state as
\begin{equation}
	\mathcal{M}_{A}(X_{A}) = \sum_{x=1}^{n} \tr\left (M_{x}X_{A} \right ) \ketbra{x}{x}_{R},
\end{equation}
where the system $ R $ is a classical register system \cite{watrous2018TTQI}; we use the same calligraphic letter $ \mathcal{M}_{A} $ both for the POVM and for the above measurement channel with a slight abuse of notation. The convex set of quantum measurements on $ d $ dimensional systems with $ n $ outcomes is denoted by $ \mathbf{M}(d,n) $ \cite{dariano2005ClassicalRandomnessQuantum}; in the following, any single system is assumed to be $ d $ dimensional, and quantum measurements on each system are assumed to have $ n $ outcomes for simplicity.

\begin{figure}[htbp]
    \centering
    \includegraphics{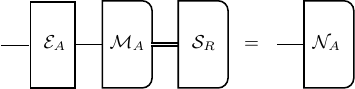}
    \caption{Transformation of a quantum measurement $ \mathcal{M}_{A} $ to a quantum measurement $ \mathcal{N}_{A}$ through a pre-processing channel $ \mathcal{E}_{A} $ and a classical post-processing channel $ \mathcal{S}_{R} $. The double line means classical data.}
	\label{fig: pre-/post-processing of a quantum measurement}
\end{figure}
A quantum measurement $ \mathcal{M}_{A} $ can be converted to another quantum measurement by a pre-processing channel $ \mathcal{E}_{A} $ and a classical post-processing channel $ \mathcal{S}_{R} $ as shown in Fig.~\ref{fig: pre-/post-processing of a quantum measurement} \cite{buscemi2005CleanPOVM,oszmaniec2017SimulatingPOVM}.  A classical post-processing on the outcome effectively results in statistical mixing among the POVM elements of the quantum measurement \cite{buscemi2005CleanPOVM}: consider a classical post-processing channel $ \mathcal{S}_{R} $ that sends an outcome $ x $ to an outcome $ y $ with a probability $ p(y\vert x) $, where $ \sum_{y}p(y\vert x) = 1 $ for all $ x $. It transforms a quantum measurement $ \mathcal{M}_{A} = \{M_{x}\}_{x=0}^{n-1} $ as follows:
\begin{align}
	\mathcal{S}_{R}\circ \mathcal{M}_{A}(\rho_{A}) &= \sum_{x}\tr_{A}(M_{x}\rho_{A}) \sum_{y} p(y\vert x) \ket{y}\! \bra{y}_{R}\\
	&=\sum_{y}\tr_{A}\left (M_{y}'\rho_{A}\right )  \ket{y}\! \bra{y}_{R},
\end{align}
where $ \mathcal{M}_{A}'=\{M_{y}'=\sum_{x}p(y\vert x)M_{x}\}_{y=0}^{n-1} $ is a valid quantum measurement satisfying that $ M_{y}' \ge 0$ and $ \sum_{y}M_{y}'=I_{A} $.

A pre-processing channel $ \mathcal{E}_{A} $ for a quantum measurement $ \mathcal{M}_{A} $ can also be described by its action on the POVM elements considering the output statistics as follows:
\begin{equation}\label{key}
	p_{x}=\tr_{A} \left[ M_{x}\mathcal{E}_{A}(\rho_{A}) \right] = \tr_{A} \left[ \mathcal{E}_{A}^{\dag}\left( M_{x} \right) \rho_{A} \right],
\end{equation}
where $ \mathcal{E}_{A}^{\dag} $ is the adjoint map of $ \mathcal{E}_{A} $ \cite{wilde2017QIT}.
Therefore, a quantum measurement $ \mathcal{M}_{A} $ with a pre-processing channel $ \mathcal{E}_{A} $ is the same as a quantum measurement $ \widetilde{\mathcal{M}}_{A} $:
\begin{equation}\label{key}
	\widetilde{\mathcal{M}}_{A} \equiv \mathcal{M}_{A}\circ \mathcal{E}_{A}=\{\mathcal{E}_{A}^{\dag}(M_{x})\}.
\end{equation}

In the resource theory of quantum coherence, one quantifies quantum coherence with respect to a chosen basis $ \{\ket{i}\} $ the so-called incoherent basis. A quantum state and an operator are incoherent if they are diagonal in the incoherent basis. A quantum measurement $ \mathcal{M}_{A}=\{M_{x}\}_{x=0}^{n-1} $ is called incoherent if all its POVM elements are incoherent, i.e., $ \Delta_{A}M_{x}=M_{x} $ for all $ x $ where $ \Delta_{A} $ is the dephasing channel in the incoherent basis \cite{oszmaniec2019OperationalRelevanceResource,theurer2019QuantifyingOperationsApplication}. We denote the set of the $ n $ outcome incoherent measurements on $ d $ dimensional systems as $ \mathbf{I}(d,n) $.
We take the set of the incoherent measurements as the free resource for quantum coherence of quantum measurements. Operationally, an incoherent measurement $ \mathcal{M}_{A} $ on an input state $ \rho_{A} $ results in an output statistics independent of the quantum coherence of the state as
\begin{align}\label{key}
	p_{x}&=\tr_{A}(\rho_{A}M_{x})\\ &=\tr_{A}(\rho_{A}\Delta(M_{x}))\\
	&=\tr_{A}(\Delta(\rho_{A})M_{x}).
\end{align}
That is, the output statistics depends only on the incoherent part of the input state \cite{theurer2019QuantifyingOperationsApplication}.

For quantum entanglement, a quantum measurement with all its POVM elements being separable operators is called separable; the set of separable measurements is strictly larger than the set of LOCC measurements \cite{bennett1999QuantumNonlocalityEntanglement,childs2013FrameworkBoundingNonlocality}. We take the set of the separable measurements as a free resource \cite{oszmaniec2019OperationalRelevanceResource}; the set of separable measurements on $ d $ dimensional systems $ A $ and $ B $ is denoted as $ \sepm(A\! :\! B) $. Note that entanglement theory does not have any resource destroying channel which destroys entanglement while preserving any separable state \cite{gour2017QuantumResourceTheories,zwliu2017ResourceDestroyingMaps}, analogous to the dephasing channel in the resource theory of quantum coherence. This disallows the operational interpretation of the separable measurements by its outcome statistics' dependence on the entanglement of input states, distinct from the case of the incoherent measurements. However, when the separable measurement is regarded as free, one can still quantify quantum entanglement necessary to implement bipartite measurements which are not separable measurements; such a measure is shown to have operational meanings such as an advantage in the distributed state discrimination \cite{lipka-bartosik2021OperationalSignificanceQuantum}.

Next we ask for the set of free transformations for quantum resources. Firstly, one can easily check that an incoherent measurement stays incoherent under a statistical mixing by a classical post-processing channel; the same holds for the separable measurements. For pre-processing channels, note that the output register system $ R $ of any measurement is treated as being classical; thus we take the register states $ \{\ket{x}_{R}\}_{x=0}^{n-1} $ as the incoherent basis of the system $ R $. Then we figure out the free pre-processing channels for quantum coherence of quantum measurements as follows \cite{theurer2019QuantifyingOperationsApplication}: 
\begin{proposition}
	The set of pre-processing quantum channels that preserves incoherent measurements is the set of detection-incoherent channels $ \mathcal{E}_{A} $ which is characterized by
	\begin{equation}\label{key}
		\Delta_{A} \circ \mathcal{E}_{A} = \Delta_{A}\circ \mathcal{E}_{A}\circ \Delta_{A}.
	\end{equation}
\end{proposition}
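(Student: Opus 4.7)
The plan is to exploit the representation $\widetilde{\mathcal{M}}_{A} = \mathcal{M}_A \circ \mathcal{E}_A = \{\mathcal{E}_A^\dagger(M_x)\}$ derived earlier, together with the self-adjointness of the dephasing channel, $\Delta_A^\dagger = \Delta_A$. Taking the Hilbert--Schmidt adjoint of the proposed identity turns it into the equivalent condition $\mathcal{E}_A^\dagger \circ \Delta_A = \Delta_A \circ \mathcal{E}_A^\dagger \circ \Delta_A$, which says precisely that $\mathcal{E}_A^\dagger$ sends every diagonal operator to a diagonal operator. This dualisation reduces the characterisation to a short statement about how $\mathcal{E}_A^\dagger$ treats incoherent Hermitian operators, and this is exactly the relevant object because pre-processing acts on POVM elements through $\mathcal{E}_A^\dagger$.

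For the if direction, I would assume the adjoint condition and pick an arbitrary incoherent POVM $\{M_x\}$, so that $\Delta_A(M_x) = M_x$. A short push of $\Delta_A$ through $\mathcal{E}_A^\dagger$ then gives $\Delta_A(\mathcal{E}_A^\dagger(M_x)) = \mathcal{E}_A^\dagger(\Delta_A(M_x)) = \mathcal{E}_A^\dagger(M_x)$, so every transformed POVM element is incoherent and $\widetilde{\mathcal{M}}_A$ is incoherent. For the only if direction, I would feed in the rank-one projector POVM $\{\ket{i}\!\bra{i}\}_{i=0}^{d-1}$, which is itself an incoherent measurement. By hypothesis its image $\{\mathcal{E}_A^\dagger(\ket{i}\!\bra{i})\}$ is again an incoherent POVM, so each operator $\mathcal{E}_A^\dagger(\ket{i}\!\bra{i})$ is diagonal. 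Because those projectors form a basis of the real vector space of diagonal Hermitian operators on $A$, linearity of $\mathcal{E}_A^\dagger$ promotes this pointwise fact to $\mathcal{E}_A^\dagger \circ \Delta_A = \Delta_A \circ \mathcal{E}_A^\dagger \circ \Delta_A$, and taking adjoints once more recovers the claimed identity.

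I do not foresee a serious obstacle. The only step that deserves a moment of care is the spanning argument in the only if direction: I want to be sure that a single incoherent POVM furnishes enough constraints to pin down $\mathcal{E}_A^\dagger$ on the full diagonal subspace, and the rank-one projector POVM already achieves this by spanning. The main subtlety of the proof is therefore conceptual rather than technical, namely spotting that Hilbert--Schmidt duality turns the measurement-level preservation statement into the operator-level commutation relation between $\mathcal{E}_A^\dagger$ and $\Delta_A$ after dephasing, after which both directions collapse to one-line computations.
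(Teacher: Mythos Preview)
Your proposal is correct and follows essentially the same approach as the paper: both dualise to the adjoint condition $\mathcal{E}_A^\dagger\circ\Delta_A=\Delta_A\circ\mathcal{E}_A^\dagger\circ\Delta_A$ and then argue that preservation of incoherent POVM elements forces $\mathcal{E}_A^\dagger$ to map diagonal operators to diagonal operators. The only cosmetic difference is in the ``only if'' step: you invoke the specific incoherent measurement $\{\ketbra{i}{i}\}_{i}$ together with a spanning/linearity argument, whereas the paper simply observes that for any operator $N_x$ the dephased element $\Delta_A N_x$ is itself incoherent and applies the hypothesis directly to it, bypassing the spanning step.
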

For readability, we defer all proofs to the Appendices hereafter.

\section{Resource monotones}
Quantum resources of a quantum channel can be measured by various resource monotones regarding quantum resources in quantum states \cite{gour2019HowQuantifyDynamical}. For quantum measurements, from the definitions of the incoherent measurements and the separable measurements, it is clear that the quantum resources of the POVM elements are essential to the quantum resources in quantum measurements. So we conceive a different relative entropy type quantity between two quantum measurements that aims to measure the quantum resources of the POVM elements. We define the measurement relative entropy between quantum measurements $ \mathcal{M}_{A}=\{M_{x}\}_{x} $ and $ N_{A}=\{N_{x}\}_{x} $ as follows:
\begin{align}\label{key}
	D_{m}(\mathcal{M}_{A}\Vert \mathcal{N}_{A}) &\coloneqq \dfrac{1}{d}D \left (\oplus_{x} M_{x}\Vert \oplus_{x} N_{x} \right )\\
	&=\dfrac{1}{d}\sum_{x} D(M_{x}\Vert N_{x}),
\end{align}
where $ D(\cdot\Vert \cdot) $ is the quantum relative entropy \cite{umegaki1962ConditionalExpectationOperator,gour2021UniquenessOptimalityDynamical,watrous2018TTQI} defined as
\begin{equation}\label{key}
D(M\Vert N)\coloneqq \begin{cases} \tr \{M (\log  M - \log N)\} & \text{im } M \subseteq \text{im } N\\ \infty & \text{else} \end{cases}
\end{equation}
for positive semidefinite operators $ M $ and $ N $; $ \text{im } M $ is the image of $ M $. We use the logarithm base two.

The measurement relative entropy satisfies the following properties:
\begin{lemma}
	Let $ \mathcal{M}_{A}, \mathcal{N}_{A}, \mathcal{K}_{A}, \mathcal{L}_{A}\in \mathbf{M}(d,n) $ be measurement channels, $ \mathcal{E}_{A} $ a unital quantum channel, and $ \mathcal{U}_{A} $ a unitary channel. Let $ \mathcal{S}_{R} $ be a classical channel that sends $ \ket{x}_{R} $ to $ \ket{y}_{R} $ with a probability $ p(y\vert x) $ that satisfies $ \sum_{y} p(y\vert x)=1$ for all $ x $. Let $ 0\le p\le 1 $. The following holds:
	\begin{enumerate}
		\item $ D_{m}(\mathcal{M}_{A}\Vert \mathcal{N}_{A})\ge 0 $; the equality holds if and only if $ \mathcal{M}_{A}=\mathcal{N}_{A} $,
		\item $ D_{m}(\mathcal{M}_{A}\circ \mathcal{E}_{A}\Vert \mathcal{N}_{A}\circ \mathcal{E}_{A})\le D_{m}(\mathcal{M}_{A}\Vert \mathcal{N}_{A}),$
		\item $	D_{m}(\mathcal{M}_{A}\circ \mathcal{U}_{A}\Vert \mathcal{N}_{A}\circ \mathcal{U}_{A}) = D_{m}(\mathcal{M}_{A}\Vert \mathcal{N}_{A}), $
		\item $ D_{m}(\mathcal{S}_{R}\circ \mathcal{M}_{A}\Vert \mathcal{S}_{R}\circ \mathcal{N}_{A}) \le D_{m} (\mathcal{M}_{A}\Vert \mathcal{N}_{A})$,
		\item $	D_{m}(\mathcal{M}_{A}\otimes \mathcal{N}_{B}\Vert \mathcal{K}_{A}\otimes \mathcal{L}_{B}) = D_{m}(\mathcal{M}_{A}\Vert \mathcal{K}_{A}) +D_{m}(\mathcal{N}_{B}\Vert \mathcal{L}_{B}), $
		\item $	D_{m}(p\mathcal{M}_{A}+(1-p)\mathcal{N}_{A}\Vert p\mathcal{K}_{A}+(1-p)\mathcal{L}_{A}) \le pD_{m}(\mathcal{M}_{A}\Vert \mathcal{K}_{A})+(1-p)D_{m}(\mathcal{N}_{A}\Vert \mathcal{L}_{A}).$
	\end{enumerate}
\end{lemma}

We conceive quantum resource monotones for quantum coherence and quantum entanglement, respectively:
\begin{align}
	C_{m}(\mathcal{M}_{A}) &\coloneqq \min_{\mathcal{F}_{A}\in \mathbf{I}(d,n)} D_{m}(\mathcal{M}_{A}\Vert \mathcal{F}_{A}),\\
	E_{m}(\mathcal{M}_{AB})&\coloneqq \min_{\mathcal{F}_{AB}\in \sepm(A : B)} D_{m}(\mathcal{M}_{AB}\Vert \mathcal{F}_{AB}).
\end{align}
Both $ C_{m} $ and $ E_{m} $ are non-negative and faithful thanks to the property of the measurement relative entropy. The monotonicity of $ C_{m} $ under free transformations can be seen as follows: for any unital detection-incoherent channel $ \mathcal{E}_{A} $ and a classical channel $ \mathcal{S}_{R} $,
\begin{align}
	&C_{m}(\mathcal{S}_{R}\circ \mathcal{M}_{A}\circ \mathcal{E}_{A}) \nonumber \\
	&= \min_{\mathcal{F}_{A}\in \mathbf{I}(d,n)} D_{m} (\mathcal{S}_{R}\circ \mathcal{M}_{A}\circ \mathcal{E}_{A} \Vert \mathcal{F}_{A})\nonumber \\
	&\le \min_{\mathcal{F}_{A}\in \mathbf{I}(d,n)} D_{m} (\mathcal{S}_{R}\circ \mathcal{M}_{A}\circ \mathcal{E}_{A} \Vert \mathcal{S}_{R}\circ \mathcal{F}_{A}\circ \mathcal{E}_{A})\nonumber \\
	&\le \min_{\mathcal{F}_{A}'\in \mathbf{I}(d,n)} D_{m} (\mathcal{M}_{A} \Vert \mathcal{F}_{A}'),
\end{align}
where the first inequality is due to the fact that an incoherent measurement remains incoherent after free transformations, and the second inequality is from the monotonicity of the measurement relative entropy. Furthermore, the quantum coherence monotone can be explicitly calculated:
\begin{proposition}
	The quantum coherence monotone of a quantum measurement $ \mathcal{M}_{A}=\{M_{x}\} $ is given by
	\begin{equation}\label{eq: coherence monotone}
		C_{m}(\mathcal{M}_{A})= \dfrac{1}{d}\sum_{x} \left\{ S(\Delta M_{x})-S(M_{x}) \right\},
	\end{equation}
	where $ S(\cdot) $ is the von Neumann entropy.
\end{proposition}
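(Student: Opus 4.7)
The plan is to carry out the convex minimization defining $C_m(\mathcal{M}_A)$ in closed form by exhibiting the optimal incoherent measurement pointwise. The key observation is that for any incoherent POVM $\mathcal{F}_A = \{F_x\}_x$, each $\log F_x$ is diagonal in the incoherent basis, so $\tr[M_x \log F_x] = \tr[(\Delta M_x) \log F_x]$. Adding and subtracting $\tr[(\Delta M_x) \log \Delta M_x] = -S(\Delta M_x)$ in the definition of $D(M_x\Vert F_x)$ yields the additive decomposition
\begin{equation}
D(M_x \Vert F_x) = [S(\Delta M_x) - S(M_x)] + D(\Delta M_x \Vert F_x).
\end{equation}

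Summing over $x$, dividing by $d$, and using the non-negativity $D(\Delta M_x \Vert F_x) \ge 0$ term by term yields the lower bound
\begin{equation}
D_m(\mathcal{M}_A \Vert \mathcal{F}_A) \ge \frac{1}{d} \sum_x \bigl[S(\Delta M_x) - S(M_x)\bigr]
\end{equation}
valid for every $\mathcal{F}_A \in \mathbb{I}(d,n)$. To saturate it I would take the candidate $\mathcal{F}_A^* = \{\Delta M_x\}_x$: each $\Delta M_x$ is positive and diagonal, and by linearity of $\Delta_A$,
\begin{equation}
\sum_x \Delta M_x = \Delta_A\Bigl(\sum_x M_x\Bigr) = \Delta_A(I_A) = I_A,
\end{equation}
so $\mathcal{F}_A^* \in \mathbb{I}(d,n)$ is a bona fide incoherent measurement. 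With this choice every $D(\Delta M_x \Vert F_x)$ vanishes and the lower bound is attained, giving the claimed formula.

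The main subtle point to verify is the finiteness of $D(M_x \Vert \Delta M_x)$, which requires $\mathrm{im}\, M_x \subseteq \mathrm{im}\, \Delta M_x$. This holds because $M_x \ge 0$ and $\langle i | M_x | i\rangle = 0$ together force $M_x \ket{i} = 0$, so any basis vector absent from the support of $\Delta M_x$ lies in $\ker M_x$. Conceptually the argument is the measurement-channel analogue of the relative-entropy-of-coherence formula for states, and the decisive new feature is that the independently optimal diagonal targets $\Delta M_x$ automatically respect the completeness relation $\sum_x M_x = I_A$, so that no joint reallocation across outcomes is needed to enforce feasibility.
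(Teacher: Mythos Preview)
Your overall strategy is sound and in fact cleaner than the paper's. The paper normalizes each POVM element as $M_x=p_x\rho_x$, $F_x=q_x\sigma_x$ and uses the scaling identity $D(p_x\rho_x\Vert q_x\sigma_x)=p_xD(\rho_x\Vert\sigma_x)+p_x\log(p_x/q_x)$ to split $D_m(\mathcal{M}_A\Vert\mathcal{F}_A)$ into a ``state part'' $\tfrac1d\sum_x p_xD(\rho_x\Vert\sigma_x)$ and a ``trace part'' $\tfrac1d D(\vec p\,\Vert\,\vec q)$, which it then minimizes separately (first forcing $q_x=p_x$, then $\sigma_x=\Delta\rho_x$). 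Your Pythagorean-type identity $D(M_x\Vert F_x)=[S(\Delta M_x)-S(M_x)]+D(\Delta M_x\Vert F_x)$, based on the observation that $\log F_x$ is diagonal, reaches the minimizer $\mathcal{F}_A^\ast=\{\Delta M_x\}$ in one stroke without ever normalizing, and your check that $\{\Delta M_x\}$ is a legitimate POVM and that $\mathrm{im}\,M_x\subseteq\mathrm{im}\,\Delta M_x$ is correct.

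There is, however, one real slip: the assertion that $D(\Delta M_x\Vert F_x)\ge 0$ \emph{term by term} is false. For unnormalized positive operators, $D(P\Vert Q)$ can be negative whenever $\tr P<\tr Q$; for instance $\Delta M_x=\tfrac12 I_A$ and $F_x=\tfrac34 I_A$ give a negative value. What is true---and what your argument actually needs---is that the \emph{sum} is non-negative: since $\{\Delta M_x\}_x$ is itself a measurement in $\mathbb{I}(d,n)$, property~1 of the measurement-relative-entropy lemma gives
\[
\frac{1}{d}\sum_x D(\Delta M_x\Vert F_x)=D_m\bigl(\Delta\mathcal{M}_A\,\big\Vert\,\mathcal{F}_A\bigr)\ge 0,
\]
with equality iff $F_x=\Delta M_x$ for all $x$. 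With this one-line fix your lower bound and its saturation go through exactly as written.
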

Thus, if we regard $ S(\Delta M_{x})-S(M_{x}) $ as the quantum coherence of the POVM element $ M_{x} $, the quantum coherence monotone $ C_{m}(\mathcal{M}_{A}) $ amounts to the sum of the quantum coherence of all the POVM elements in $ \mathcal{M}_{A} $.

Because entanglement theory does not possess a resource destroying channel \cite{gour2017QuantumResourceTheories}, the entanglement monotone $ E_{m} $ does not possess an analogous expression as eq.~\eqref{eq: coherence monotone}. However, one can still compute the entanglement monotone for some cases, such as the Bell measurement and the Werner measurement: we defer the results to the Appendices for interested readers.

To summarize, taking the set of incoherent measurements as free resource, we regard unital detection-incoherent pre-processing channels with classical post-processing channels as the free transformations; the quantum coherence and entanglement of quantum measurements are quantified by $C_{m}$ and $E_{m}$, respectively.

\section{Quantum coherence conversion to quantum entanglement}
\begin{figure}
	\centering
	\includegraphics{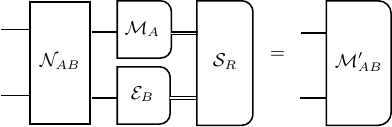}
	\caption{Building a bipartite quantum measurement $ \mathcal{M}_{AB}' $ from two quantum measurements $ \mathcal{M}_{A} $ and $ \mathcal{E}_{B} $ with a pre-processing channel $ \mathcal{N}_{AB} $ and a classical post-processing channel $ \mathcal{S}_{R} $. The double line means classical data.}
	\label{fig: building a bipartite quantum measurement}
\end{figure}
We are now in a position to restate our problem concerning whether quantum coherence of a quantum measurement can be converted into quantum entanglement of a bipartite quantum measurement as depicted in Fig.~\ref{fig: building a bipartite quantum measurement}. Here is our first result:
\begin{theorem}\label{thm: C_m >= E_m}
	Let $ \mathcal{M}_{A} \in \mathbf{M}(d,n)$ be a quantum measurement. For any ancillary incoherent measurement $ \mathcal{E}_{B}\in \mathbf{I}(d,n) $ and a unital detection-incoherent pre-processing channel $ \mathcal{N}_{AB}$, it holds that
	\begin{equation}\label{eq: C_m >= E_m}
		C_{m}(\mathcal{M}_{A})\ge E_{m} (\mathcal{M}_{A}\otimes \mathcal{E}_{B} \circ \mathcal{N}_{AB}).
	\end{equation}
\end{theorem}
This shows that the quantum coherence of a quantum measurement $ \mathcal{M}_{A} $ is an upper bound on the quantum entanglement of any resultant bipartite quantum measurement under the free transformations. Note that a classical post-processing channel is unnecessary in the right-hand side of eq.~\eqref{eq: C_m >= E_m} since it just deteriorates quantum resources as argued before. While it is not always the case that quantum coherence of a quantum measurement fully converts to quantum entanglement, a proper choice of free transformation might achieve the conversion completely as shown in the next result:
\begin{theorem}\label{thm: meas. coh. ent. equality}
	Let $ \mathcal{M}_{A}\in \mathbf{M}(d,n) $ be a quantum measurement. Let $ \mathcal{E}_{B} \in \mathbf{I}(d,n)$ be an incoherent measurement given by
	\begin{equation}\label{key}
		\mathcal{E}_{B}=\begin{cases}
			\{E_{0},\dots,E_{d-1},0,\dots,0 \} & n\ge d,\\
			\{E_{0},\dots,E_{n-2}, I_{B}-\sum_{x=0}^{n-2}E_{x} \} & n < d,
		\end{cases}
	\end{equation}
	where $ E_{x}=\ketbra{x}{x}_{B} $.
	For $ n\ge d $, the following holds:
	\begin{equation}\label{key}
		\sup_{\mathcal{N}_{AB}\in \mathbf{UDI}} E_{m}(\mathcal{M}_{A}\otimes \mathcal{E}_{B}\circ \mathcal{N}_{AB}) = C_{m}(\mathcal{M}_{A}),
	\end{equation}
	where $\mathbf{UDI}$ is the set of unital detection-incoherent channels: an optimal pre-processing channel $ \mathcal{N}_{AB} $ is given by the adjoint channel of the generalized $ \mathrm{CNOT} $ gate.
	For $ n <  d $, the following holds:
	\begin{multline}\label{key}
		\dfrac{n-1}{d}C_{m}(\mathcal{M}_{A}) \le \\
		\sup_{\mathcal{N}_{AB}\in \mathbf{UDI}} E_{m}(\mathcal{M}_{A}\otimes \mathcal{E}_{B}\circ \mathcal{N}_{AB})\\
		 \le C_{m}(\mathcal{M}_{A}).
	\end{multline}
\end{theorem}
When there is a large enough number of measurement outcomes, that is, $ n\ge d $, the quantum coherence completely converts to quantum entanglement for quantum measurement; the class of informationally complete measurements corresponds to this because an informationally complete measurement needs at least $ n\ge d^{2} $ outcomes \cite{busch1991InformationallyCompleteSets,watrous2018TTQI}. In the case of a small number of outcomes $ n<d $, the quantum coherence of a quantum measurement $ \mathcal{M}_{A} $ provides an upper and a lower bound on the quantum entanglement of a bipartite quantum measurement obtained from $ \mathcal{M}_{A} $ without additional coherence: an extreme case of $ n=1 $ corresponds to the trivial measurement $\mathcal{M}_{A}=\{I_{A}\}$ that does not possess quantum coherence.

A typical example of the above result is given by $ \mathcal{M}_{A}=\left\{\ketbra{\pm}{\pm}_{A}:\ket{\pm}=\frac{1}{\sqrt{2}}(\ket{0}_{A}\pm \ket{1}_{A})\right\} $, $ \mathcal{E}_{B}=\{\ketbra{0}{0}_{B}, \ketbra{1}{1}_{B}\} $, and the adjoint channel of the CNOT gate as a pre-processing channel, for which we observe that $ \mathcal{M}_{A}\otimes \mathcal{E}_{B}\circ \mathcal{U}_{\mathrm{CNOT}}^{\dag} = \{\ketbra{\Phi^{\pm}}{\Phi^{\pm}}_{AB},\ketbra{\Psi^{\pm}}{\Psi^{\pm}}_{AB}\} $, where $ \ket{\Phi^{\pm}}_{AB}=\frac{1}{\sqrt{2}}(\ket{00}_{AB}\pm \ket{11}_{AB}) $ and $ \ket{\Psi^{\pm}}_{AB}=\frac{1}{\sqrt{2}}(\ket{01}_{AB}\pm \ket{10}_{AB}) $; The quantum resources are given by $ C_{m}(\mathcal{M}_{A})=E_{m}(\mathcal{M}_{A}\otimes \mathcal{E}_{B}\circ \mathcal{U}_{\mathrm{CNOT}}^{\dag}) =1$.

We emphasize that outputs of any quantum measurements are classical states having no quantum resources; this clearly distinguishes the above results from those on the quantum resource generating powers \cite{theurer2020QuantifyingDynamicalCoherence}.

\section{Coherence monotones from entanglement monotones}
We have seen that the quantum coherence of a quantum measurement can be converted into the quantum entanglement of a bipartite quantum measurement. This implies that, given a quantum entanglement monotone for bipartite quantum measurements, one can utilize it to construct a quantum coherence monotone of a quantum measurement by the convertible amount of the quantum entanglement \footnote{An analogous result for nonclassicality and entanglement in optical modes was known in \cite{asboth2005ComputableMeasureNonclassicality}.}. In the following we show this quantitatively. A quantum coherence monotone is required to satisfy the following properties, that is, non-negativity, faithfulness, monotonicity, and convexity \cite{chitambar2019QuantumResourceTheories,yliu2020OperationalResourceTheory}: for a quantum measurement $\mathcal{M}_{A}$, any unital detection-incoherent channel $ \mathcal{F}_{A} $, and any classical channel $ \mathcal{S}_{R}$,
\begin{enumerate}
	\item $ C(\mathcal{M}_{A}) \ge 0$; $ C(\mathcal{M}_{A}) = 0$ if and only if $ \mathcal{M}_{A}\in \mathbf{I}(d,n) $,
	\item $ C(\mathcal{S}_{R} \circ \mathcal{M}_{A}\circ \mathcal{F}_{A})\le C(\mathcal{M}_{A}) $,
	\item $ C\left (\sum_{i}p_{i}\mathcal{M}_{A}^{(i)}\right )\le \sum_{i}p_{i} C\left (\mathcal{M}_{A}^{(i)}\right ) $,
\end{enumerate}
where $p_{i}\ge 0$, $\sum_{i} p_{i}=1$, and $\mathcal{M}_{A}^{(i)}$'s are quantum measurements.
Similarly a quantum entanglement monotone $ E $ is required to satisfy the following conditions as well: for a quantum measurement $\mathcal{M}_{AB}$, any pre-processing channel $ \mathcal{F}_{AB} $ that preserves $ \sepm(A\! : \! B) $, and any classical channel $ \mathcal{S}_{R}$ acting on the system $ A $ and $ B $,
\begin{enumerate}
	\item $ E(\mathcal{M}_{AB}) \ge 0$; $ E(\mathcal{M}_{AB}) = 0$ if and only if $ \mathcal{M}_{AB}\in \sepm(A\! : \! B) $,
	\item $ E(\mathcal{S}_{R} \circ \mathcal{M}_{AB}\circ \mathcal{F}_{AB})\le E(\mathcal{M}_{AB}) $,
	\item $ E\left (\sum_{i}p_{i}\mathcal{M}_{AB}^{(i)}\right )\le \sum_{i}p_{i} E\left (\mathcal{M}_{AB}^{(i)}\right ) $,
\end{enumerate}
where $p_{i}\ge 0$, $\sum_{i} p_{i}=1$, and $\mathcal{M}_{AB}^{(i)}$'s are quantum measurements.

We figure out that once a quantum entanglement monotone for quantum measurements is given, one can construct a quantum coherence monotone as follows:
\begin{theorem}\label{thm: coh. monotone via ent. monotone}
	Let $ \mathcal{M}_{A}\in \mathbf{M}(d,n) $ be a quantum measurement with $ n > 1 $. Let $ \mathcal{E}_{B} \in \mathbf{I}(d,n)$ be an incoherent measurement given by
	\begin{equation}\label{key}
		\mathcal{E}_{B}=\begin{cases}
			\{E_{0},\dots,E_{d-1},0,\dots,0 \} & n\ge d,\\
			\{E_{0},\dots,E_{n-2}, I_{B}-\sum_{x=0}^{n-2}E_{x} \} & n < d,
		\end{cases}
	\end{equation}
	where $ E_{x}=\ketbra{x}{x}_{B} $.
	A quantum entanglement monotone $ E $ for a quantum measurement induces a quantum coherence monotone for a quantum measurement as follows:
	\begin{equation}\label{key}
		C(\mathcal{M}_{A}) \coloneqq \sup_{\mathcal{F}_{AB}\in \mathbf{UDI}} E(\mathcal{M}_{A}\otimes \mathcal{E}_{B}\circ \mathcal{F}_{AB}),
	\end{equation}
	where $ \mathbf{UDI} $ is the set of unital detection-incoherent channels.
\end{theorem}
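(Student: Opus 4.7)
The proof plan is to verify the three defining properties of a coherence monotone---non-negativity with faithfulness, monotonicity under UDI pre-processings and classical post-processings, and convexity---for the quantity $C$ built as a supremum of the given entanglement monotone $E$ over UDI pre-processings. Non-negativity is immediate from $E\ge 0$.

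For the forward direction of faithfulness, suppose $\mathcal{M}_A\in\mathbb{I}(d,n)$. Since $\mathcal{E}_B$ is incoherent, every $M_x\otimes E_y$ is diagonal in the product incoherent basis of $AB$. The proposition characterizing UDI pre-processings guarantees that $\mathcal{F}_{AB}^{\dag}$ sends incoherent POVM elements to incoherent POVM elements, so each output element $\mathcal{F}_{AB}^{\dag}(M_x\otimes E_y)$ remains diagonal in the product basis---hence a nonnegative combination of pure product states, and therefore separable. The resulting measurement lies in $\sepm(A\!:\!B)$ and $E$ vanishes, giving $C(\mathcal{M}_A)=0$. For the converse, if $\mathcal{M}_A\notin\mathbb{I}(d,n)$, Theorem~\ref{thm: meas. coh. ent. equality} delivers a concrete UDI channel $\mathcal{N}_{AB}$---the generalized CNOT adjoint when $n\ge d$, or a UDI channel attaining the $(n-1)/d$ lower bound when $n<d$---such that $E_m(\mathcal{M}_A\otimes\mathcal{E}_B\circ\mathcal{N}_{AB})>0$. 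Faithfulness of $E_m$ then tells us the resulting bipartite measurement is not separable, so faithfulness of $E$ yields $E(\mathcal{M}_A\otimes\mathcal{E}_B\circ\mathcal{N}_{AB})>0$ and hence $C(\mathcal{M}_A)>0$.

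For monotonicity, take a unital detection-incoherent $\mathcal{G}_A$ and a classical $\mathcal{S}_R$, and rewrite
\begin{equation*}
	(\mathcal{S}_R\circ\mathcal{M}_A\circ\mathcal{G}_A)\otimes\mathcal{E}_B\circ\mathcal{F}_{AB}=(\mathcal{S}_R\otimes\mathrm{id})\circ(\mathcal{M}_A\otimes\mathcal{E}_B)\circ\mathcal{F}'_{AB},
\end{equation*}
with $\mathcal{F}'_{AB}\coloneqq(\mathcal{G}_A\otimes\mathrm{id}_B)\circ\mathcal{F}_{AB}$. A short algebraic check from $\Delta\circ\mathcal{E}=\Delta\circ\mathcal{E}\circ\Delta$ shows that UDI is closed under composition and that tensoring with $\mathrm{id}_B$ preserves UDI, so $\mathcal{F}'_{AB}\in\mathrm{UDI}$. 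The factor $\mathcal{S}_R\otimes\mathrm{id}$ is a classical channel acting on the register alone, so the monotonicity of $E$ under classical post-processing (apply $E$-monotonicity with the SepM-preserving pre-processing set to the identity) removes it at no cost. Since $\mathcal{F}'_{AB}$ traces out a subset of UDI as $\mathcal{F}_{AB}$ varies, taking the supremum gives $C(\mathcal{S}_R\circ\mathcal{M}_A\circ\mathcal{G}_A)\le C(\mathcal{M}_A)$. Convexity is immediate from the bilinearity of tensor and composition: $(\sum_i p_i\mathcal{M}_A^{(i)})\otimes\mathcal{E}_B\circ\mathcal{F}_{AB}=\sum_i p_i[\mathcal{M}_A^{(i)}\otimes\mathcal{E}_B\circ\mathcal{F}_{AB}]$, and the convexity of $E$ followed by pushing the supremum into each summand yields the desired inequality.

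The main obstacle is the reverse faithfulness direction, which requires an existence statement---that some UDI pre-processing produces a genuinely entangled bipartite measurement whenever $\mathcal{M}_A$ is coherent. Theorem~\ref{thm: meas. coh. ent. equality} supplies exactly this, so the argument reduces to invoking that theorem alongside the faithfulness of $E_m$ and of $E$.
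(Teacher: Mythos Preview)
Your proof is correct and follows essentially the same approach as the paper's: verify non-negativity and faithfulness (the forward direction via $\mathbb{I}(d\times d,n\times n)\subset\sepm(A\!:\!B)$, the reverse via Theorem~\ref{thm: meas. coh. ent. equality}), monotonicity by absorbing $\mathcal{G}_A\otimes\mathrm{id}_B$ into the UDI supremum and invoking the monotonicity of $E$ under classical post-processing, and convexity from that of $E$. You spell out more explicitly than the paper does the chain $C_m>0\Rightarrow E_m>0\Rightarrow$ non-separable $\Rightarrow E>0$ in the reverse faithfulness step (and correctly note that $n>1$ is what makes the $(n-1)/d$ lower bound strictly positive), as well as the closure of $\mathrm{UDI}$ under composition and tensoring with the identity---but these are elaborations of the same argument, not a different route.
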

This shows that the idea to measure quantum coherence or nonclassicality of a quantum state by its potential to transform to quantum entanglement still holds for the case of quantum measurements \cite{asboth2005ComputableMeasureNonclassicality,streltsov2015MeasuringCoherenceEntanglement,hzhu2017OperationalMappingCoherenceEntanglement}.

\section{Conclusion}
The quantum coherence of a quantum measurement can be converted to the quantum entanglement of a bipartite quantum measurement without additional quantum coherence. We establish this by taking the set of the incoherent measurements as free resources. The set of unital detection-incoherent pre-processing channels with the classical post-processing channels consists of the free transformations for the quantum coherence of quantum measurements. We take the set of the separable measurements as the free resources for entanglement. These quantum resources are measured by resource monotones built upon the measurement relative entropy that we introduce: the measurement relative entropy between two quantum measurements is a sum of the relative entropy between the POVM elements of the quantum measurements so that it helps to capture the quantum resources in each POVM element. Thus, under the free transformations, a quantum measurement could transform to a bipartite quantum measurement of which quantum entanglement is upper bounded by the quantum coherence of the input quantum measurement; quantum coherence of a quantum measurement completely converts to quantum entanglement of a bipartite quantum measurement under the adjoint channel of the generalized CNOT gate as the pre-processing channel.

We also show that the above fact indicates that a quantum entanglement monotone of a quantum measurement induces a quantum coherence monotone of quantum measurements.

Our results strengthen the close relation between quantum coherence and quantum entanglement at the level of quantum dynamics. In the previous work \cite{theurer2020QuantifyingDynamicalCoherence}, it was unavoidable to use the dephasing channel as a pre-processing channel to pinpoint quantum resource generating powers. However, quantum measurements do not generate any quantum resource as outputs; thus, our results enlarge our understanding further in yet another aspect of quantum dynamics. Furthermore, our resource monotones only depend on the quantum measurement without any reference to quantum states distinct from typical dynamical resource monotones \cite{gour2019HowQuantifyDynamical}. Meanwhile, it is desirable to find operational meanings of the measurement relative entropy and resource monotones built on it.

We hope that our research sheds light on the properties of quantum resources of quantum dynamics; the more profound the understanding is, the more effective we can utilize the quantum resources in quantum dynamics for quantum information tasks such as quantum computation in the NISQ era.

\begin{acknowledgements}
H.-J. Kim thanks T. Theurer for his meticulous reading of the paper and advice. 
This research was supported by a National Research Foundation of Korea (NRF) grant funded by the Ministry of Science and ICT (MSIT) (Grants No. NRF-2019R1A2C1006337 and No. NRF-2020M3E4A1079678). S.L. acknowledges support from the Quantum Information Science and Technologies program of the NRF funded by the MSIT (Grant No. 2020M3H3A1105796).
\end{acknowledgements}

\appendix

\section{Resource theory of quantum measurements}
We assume that the outcome register of a quantum measurement channel is a classical system, so we take the measurement outcome basis $ \{\ket{x}_{R}\} $ as the incoherent basis of the register system $ R $. Upon this assumption the set of pre-processing channels that keeps incoherent measurements is given by the detection-incoherent channels \cite{theurer2019QuantifyingOperationsApplication}: 
\begin{proposition}
	The set of pre-processing quantum channels that keeps incoherent measurements is the set of detection-incoherent channels $ \mathcal{E} $ which is characterized by
	\begin{equation}\label{key}
		\Delta \circ \mathcal{E} = \Delta\circ \mathcal{E}\circ \Delta.
	\end{equation}
\end{proposition}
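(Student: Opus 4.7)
The plan is to reduce the characterization to a statement about the adjoint map $\mathcal{E}_A^\dagger$ acting on incoherent operators, then take adjoints to land on the stated identity. Recall from the excerpt that composing a measurement $\mathcal{M}_A = \{M_x\}$ with a pre-processing channel $\mathcal{E}_A$ yields the measurement with POVM elements $\{\mathcal{E}_A^\dagger(M_x)\}$. So $\mathcal{E}_A$ preserves $\mathbb{I}(d,n)$ if and only if, for every incoherent POVM $\{M_x\}$, each $\mathcal{E}_A^\dagger(M_x)$ is again incoherent (positivity and normalization are automatic, since $\mathcal{E}_A^\dagger$ is positive and unital).

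The next step is to promote this POVM-level condition to a statement about all incoherent operators. Any diagonal positive operator $Y = \Delta(Y)$ can, after rescaling by a sufficiently small $c > 0$, be completed to an incoherent two-outcome POVM $\{cY,\ I_A - cY\}$, so the condition above is equivalent to $\mathcal{E}_A^\dagger(Y)$ being incoherent for every diagonal $Y \geq 0$, and then by linearity (and the fact that every Hermitian diagonal operator is a difference of positive diagonal operators) for every operator in the image of $\Delta$. In symbols, this reads
\begin{equation}
\Delta \circ \mathcal{E}_A^\dagger \circ \Delta \;=\; \mathcal{E}_A^\dagger \circ \Delta.
\end{equation}

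Finally, I take adjoints of both sides, using that the dephasing channel $\Delta$ is self-adjoint (its Kraus operators $\{|i\rangle\langle i|\}$ are Hermitian) and that $(\mathcal{E}_A^\dagger)^\dagger = \mathcal{E}_A$. This yields exactly
\begin{equation}
\Delta \circ \mathcal{E}_A \circ \Delta \;=\; \Delta \circ \mathcal{E}_A,
\end{equation}
which is the defining condition of a detection-incoherent channel. For the converse, starting from this identity and taking adjoints again gives $\mathcal{E}_A^\dagger \circ \Delta = \Delta \circ \mathcal{E}_A^\dagger \circ \Delta$, so $\mathcal{E}_A^\dagger$ sends every incoherent operator to an incoherent operator, and therefore sends every incoherent POVM to an incoherent POVM.

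The main obstacle, mild as it is, will be the reduction from the POVM-level condition to the operator-level condition: one has to be careful that the class of incoherent POVM elements really spans the incoherent Hermitian operators, which is why I invoke the explicit completion $\{cY,\ I_A - cY\}$ together with linearity of $\mathcal{E}_A^\dagger$. Everything else is routine manipulation of adjoints and the self-adjointness of $\Delta$.
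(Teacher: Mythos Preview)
Your proof is correct and follows essentially the same route as the paper: both arguments pass to the adjoint $\mathcal{E}_A^\dagger$, show that it sends incoherent operators to incoherent operators, and record this as $\mathcal{E}_A^\dagger\circ\Delta = \Delta\circ\mathcal{E}_A^\dagger\circ\Delta$. You are simply more explicit than the paper in two places---the completion $\{cY,\,I_A-cY\}$ that promotes the POVM-level hypothesis to arbitrary diagonal operators, and the final adjoint step converting the identity on $\mathcal{E}_A^\dagger$ into the stated identity $\Delta\circ\mathcal{E}_A=\Delta\circ\mathcal{E}_A\circ\Delta$---and you also spell out the (easy) converse, all of which the paper leaves implicit.
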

\begin{proof}
	A quantum channel $ \mathcal{E}_{A} $ is detection-incoherent if
	\begin{equation}\label{key}
		\mathcal{E}_{A}^{\dag}\circ \Delta = \Delta \circ \mathcal{E}_{A}^{\dag}\circ \Delta,
	\end{equation}
	where $ \mathcal{E}_{A}^{\dag} $ is the adjoint map of $ \mathcal{E}_{A} $.
	Assume that a pre-processing channel $ \mathcal{E}_{A} $ keeps incoherent POVM elements incoherent such that, for $ M_{x}=\Delta M_{x} $, it holds that $ \mathcal{E}_{A}^{\dag}(M_{x}) = \Delta \circ \mathcal{E}_{A}^{\dag}(M_{x}) $. Then for an arbitrary POVM element $ N_{x} $, it follows that
	\begin{align}
		\mathcal{E}_{A}^{\dag}\circ \Delta (N_{x})&= \mathcal{E}_{A}^{\dag} (\Delta N_{x})\\
		&= \Delta\circ \mathcal{E}_{A}^{\dag}(\Delta N_{x})\\
		&=\Delta \circ \mathcal{E}_{A}^{\dag}\circ \Delta (N_{x}).
	\end{align}
	Thus, we conclude that the set of pre-processing channel that keeps incoherent measurements is the set of detection-incoherent channels.
\end{proof}

Being regarded as a quantum channel, an incoherent measurement channel also belongs to a more stringent class of channels that do not even allow preserving quantum coherence:
\begin{proposition}
	A measurement channel $ \mathcal{M}\in \mathbf{M}(d,n) $ is a classical channel characterized by $ \Delta\circ \mathcal{M}\circ\Delta = \mathcal{M}  $ if and only if it is an incoherent measurement, i.e., $ \Delta(M_{x}) = M_{x} $ for all $ x $.
\end{proposition}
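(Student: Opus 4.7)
The plan is to exploit the fact that the output register $R$ of any measurement channel is classical by construction. Writing $\mathcal{M}(X) = \sum_x \tr(M_x X)\ketbra{x}{x}_R$, every output is already diagonal in the register's incoherent basis $\{\ket{x}_R\}$, so $\Delta_R \circ \mathcal{M} = \mathcal{M}$ automatically, regardless of the POVM elements. Consequently, the condition $\Delta \circ \mathcal{M} \circ \Delta = \mathcal{M}$ (with $\Delta$ standing for $\Delta_R$ on the output and $\Delta_A$ on the input) collapses to the single requirement $\mathcal{M} \circ \Delta_A = \mathcal{M}$. The whole content of the claim is thus shifted to the input side.

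Next I would unpack $\mathcal{M} \circ \Delta_A = \mathcal{M}$ at the level of POVM elements. For every input $X_A$ and every outcome $x$ it reads
\begin{equation}
\tr\!\left(M_x \Delta_A(X_A)\right) = \tr(M_x X_A).
\end{equation}
Since the dephasing channel is self-adjoint (and idempotent) with respect to the Hilbert--Schmidt inner product, the left-hand side equals $\tr(\Delta_A(M_x)\, X_A)$. The identity therefore holds for all $X_A$ iff $\Delta_A(M_x) = M_x$ for every $x$, which is exactly the definition of an incoherent measurement. This proves the ``only if'' direction.

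For the ``if'' direction, assume $\Delta_A(M_x) = M_x$ for all $x$. Then, running the chain of equalities in reverse, $\tr(M_x \Delta_A(X_A)) = \tr(\Delta_A(M_x) X_A) = \tr(M_x X_A)$ for all $X_A$ and all $x$, so $\mathcal{M}\circ \Delta_A = \mathcal{M}$. Combined with the automatic identity $\Delta_R \circ \mathcal{M} = \mathcal{M}$, this gives $\Delta \circ \mathcal{M} \circ \Delta = \mathcal{M}$, completing the equivalence.

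There is no real obstacle here; the only subtlety is bookkeeping: one must keep track of which copy of $\Delta$ acts on which system and recognise that the output dephasing is vacuous because the measurement already outputs a classical state. Once that is noticed, the argument reduces to the self-adjointness of $\Delta$ and the non-degeneracy of the Hilbert--Schmidt pairing on Hermitian operators.
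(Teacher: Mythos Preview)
Your argument is correct and essentially mirrors the paper's proof: both observe that $\Delta_{R}\circ\mathcal{M}=\mathcal{M}$ holds automatically for any measurement channel, reduce the classicality condition to $\mathcal{M}\circ\Delta_{A}=\mathcal{M}$, and then identify this with $\Delta_{A}(M_{x})=M_{x}$ via the self-adjointness of the dephasing map. The only cosmetic difference is that you invoke the Hilbert--Schmidt pairing explicitly, whereas the paper just writes out the trace.
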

\begin{proof}
	The outcome register of a quantum measurement channel is a classical system so that we have that $ \Delta\circ \mathcal{M} = \mathcal{M} $ for any measurement channel $ \mathcal{M}\in \mathbf{M}(d,n) $. If $ \mathcal{M} $ is a classical channel, that is, $ \mathcal{M}= \Delta\circ \mathcal{M}\circ \Delta $, it follows that
	\begin{align}\label{key}
		\mathcal{M} &=  \Delta\circ \mathcal{M}\circ \Delta\\
		&= \mathcal{M}\circ \Delta\\
		&= \sum_{x} \tr (M_{x}\Delta(\cdot)) \ketbra{x}{x}_{R}\\
		&=\sum_{x} \tr (\Delta(M_{x})\cdot) \ketbra{x}{x}_{R}.
	\end{align}
	Thus we have that $ \mathcal{M}=\{M_{x}\}=\{\Delta (M_{x})\} $.  Conversely, if $ \mathcal{M} $ is an incoherent measurement channel, then tracing back the above equations proves the statement. This completes the proof.
\end{proof}
In addition, a measurement channel is a maximally incoherent operation by definition: hence any measurement channel does not generate coherence.

\begin{proposition}
	All the effects of a bipartite incoherent measurement are separable operators.
\end{proposition}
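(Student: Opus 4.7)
The plan is to unwind the definitions and observe that this proposition reduces to the elementary fact that a positive operator diagonal in a product basis is separable. First I would fix the incoherent basis on the bipartite system $AB$ as the product basis $\{\ket{i}_{A}\otimes \ket{j}_{B}\}_{i,j}$, which is the standard choice when one already has incoherent bases on $A$ and on $B$. With this convention, the dephasing channel on $AB$ factorizes as $\Delta_{AB}=\Delta_{A}\otimes \Delta_{B}$, and it acts by killing all off-diagonal matrix elements with respect to $\{\ket{i,j}\}$.

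Next I would use the hypothesis: if $\mathcal{M}_{AB}=\{M_{x}\}$ is incoherent, then each POVM element satisfies $\Delta_{AB}(M_{x})=M_{x}$, which means $M_{x}$ is diagonal in the product basis. Hence I can expand
\begin{equation}
M_{x}=\sum_{i,j} c^{(x)}_{ij}\,\ketbra{i,j}{i,j}_{AB}=\sum_{i,j}c^{(x)}_{ij}\,\ketbra{i}{i}_{A}\otimes \ketbra{j}{j}_{B},
\end{equation}
for some real coefficients $c^{(x)}_{ij}$. Positivity of $M_{x}$ forces $c^{(x)}_{ij}=\bra{i,j}M_{x}\ket{i,j}\ge 0$, so the above expansion exhibits $M_{x}$ as a non-negative combination of product states on $A$ and $B$.

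Finally I would invoke the definition of a separable operator: any positive operator admitting a decomposition of the form $\sum_{k}p_{k}\,\alpha_{k}\otimes \beta_{k}$ with $p_{k}\ge 0$ and $\alpha_{k},\beta_{k}$ positive is separable. Therefore every $M_{x}$ is separable on the cut $A\!:\!B$, which is exactly the claim. I do not anticipate a real obstacle here; the only subtlety worth spelling out is that the incoherent basis of a composite system is taken to be the tensor product of the incoherent bases on the factors, so that $\Delta_{AB}=\Delta_{A}\otimes\Delta_{B}$ and diagonality in the composite basis coincides with separability for positive operators.
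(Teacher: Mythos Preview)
Your proposal is correct and follows essentially the same argument as the paper: both use that an incoherent POVM element satisfies $M_{x}=\Delta_{AB}(M_{x})=\sum_{i,j}\bra{i,j}M_{x}\ket{i,j}\,\ketbra{i}{i}_{A}\otimes\ketbra{j}{j}_{B}$, which is manifestly a non-negative combination of product operators and hence separable. Your write-up is simply more explicit about the choice of product incoherent basis and the positivity of the diagonal coefficients, but the logic is identical.
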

\begin{proof}
	A POVM element $ M_{xy} $ of a bipartite incoherent measurement satisfies
	\begin{equation}\label{key}
		\Delta_{AB}(M_{AB})= \sum_{x',y'} \bra{x',y'}M_{xy}\ket{x',y'}_{AB} \ketbra{x'}{x'}_{A}\otimes \ketbra{y'}{y'}_{B},
	\end{equation}
thus being a separable operator.
\end{proof}

\subsection{Measurement relative entropy and resource monotones}
We utilize the quantum relative entropy between measurements to construct measurement resource monotones. For $ \mathcal{M}=\{M_{x}\}\in \mathbf{M}(d,n) $ and $ \mathcal{N}=\{N_{x}\}\in \mathbf{M}(d,n) $, we define the measurement relative entropy as
\begin{align}\label{key}
	D_{m}(\mathcal{M}\Vert \mathcal{N}) &\coloneqq \dfrac{1}{d}D \left (\oplus_{x} M_{x}\Vert \oplus_{x} N_{x} \right )\\
	&=\dfrac{1}{d}\sum_{x} D(M_{x}\Vert N_{x}),
\end{align}
where, for $ M\ge 0 $ and $ N\ge 0 $,
\begin{equation}\label{key}
D(M\Vert N)\coloneqq \begin{cases} \tr \{M (\log  M - \log N)\} & \text{if }\text{im } M \subseteq \text{im } N\\ \infty & \text{else} \end{cases}
\end{equation}
is the quantum relative entropy between positive semidefinite operators and $ \text{im } M $ is the image of an operator $ M $ \cite{watrous2018TTQI}.

The measurement relative entropy satisfies the following properties:
\begin{lemma}
	Let $ \mathcal{M}, \mathcal{N}, \mathcal{K}, \mathcal{L}\in \mathbf{M}(d,n) $ be measurement channels, $ \mathcal{E} $ a unital quantum channel, and $ \mathcal{U} $ a unitary channel. Let $ \mathcal{S}_{R} $ be a classical channel that sends $ \ket{x}_{R} $ to $ \ket{y}_{R} $ with a probability $ p(y\vert x) $ that satisfies $ \sum_{y}p(y\vert x)=1 $ for all $ x $. Let $ 0\le p\le 1 $. The following holds:
	\begin{enumerate}
		\item $ D_{m}(\mathcal{M}_{A}\Vert \mathcal{N}_{A})\ge 0 $; the equality holds if and only if $ \mathcal{M}_{A}=\mathcal{N}_{A} $,
		\item $ D_{m}(\mathcal{M}_{A}\circ \mathcal{E}_{A}\Vert \mathcal{N}_{A}\circ \mathcal{E}_{A})\le D_{m}(\mathcal{M}_{A}\Vert \mathcal{N}_{A}),$
		\item $	D_{m}(\mathcal{M}_{A}\circ \mathcal{U}_{A}\Vert \mathcal{N}_{A}\circ \mathcal{U}_{A}) = D_{m}(\mathcal{M}_{A}\Vert \mathcal{N}_{A}), $
		\item $ D_{m}(\mathcal{S}_{R}\circ \mathcal{M}_{A}\Vert \mathcal{S}_{R}\circ \mathcal{N}_{A}) \le D_{m} (\mathcal{M}_{A}\Vert \mathcal{N}_{A})$,
		\item $	D_{m}(\mathcal{M}_{A}\otimes \mathcal{N}_{B}\Vert \mathcal{K}_{A}\otimes \mathcal{L}_{B}) = D_{m}(\mathcal{M}_{A}\Vert \mathcal{K}_{A}) +D_{m}(\mathcal{N}_{B}\Vert \mathcal{L}_{B}), $
		\item $	D_{m}(p\mathcal{M}_{A}+(1-p)\mathcal{N}_{A}\Vert p\mathcal{K}_{A}+(1-p)\mathcal{L}_{A}) \le pD_{m}(\mathcal{M}_{A}\Vert \mathcal{K}_{A})+(1-p)D_{m}(\mathcal{N}_{A}\Vert \mathcal{L}_{A}).$
	\end{enumerate}
\end{lemma}
\begin{proof}
	\begin{enumerate}
		\item The non-negativity and the faithfulness of the measurement relative entropy follow from the properties of the quantum relative entropy.
		\item The measurement relative entropy is monotone under any unital pre-processing channel $ \mathcal{E} $:
	\begin{align}\label{key}
		&D_{m}(\mathcal{M}_{A}\circ \mathcal{E}_{A}\Vert \mathcal{N}_{A}\circ \mathcal{E}_{A}) \nonumber\\
		&= \dfrac{1}{d}\sum_{x} D(\mathcal{E}_{A}^{\dag} (M_{x})\Vert \mathcal{E}_{A}^{\dag}(N_{x}))\\
		&\le D_{m}(\mathcal{M}_{A}\Vert \mathcal{N}_{A}),
	\end{align}
	where we interpreted the action of the pre-processing channel $ \mathcal{E} $ through its adjoint channel on the POVM elements regarding the measurement outcome probabilities. Since $ \mathcal{E} $ is a unital quantum channel, its adjoint map $ \mathcal{E}^{\dag} $ is also a unital quantum channel. So the inequality follows from the monotonicity of the quantum relative entropy.
	\item The measurement relative entropy is invariant under any unitary pre-processing channel $ \mathcal{U} $ due to the invariance of the quantum relative entropy under isometries.
	\item The measurement relative entropy is monotone decreasing under a classical post-processing channel:
	\begin{align}
		&D_{m}(\mathcal{S}_{R}\circ \mathcal{M}_{A}\Vert \mathcal{S}_{R}\circ \mathcal{N}_{A})\nonumber\\&= \dfrac{1}{d}\sum_{y}D\left (\sum_{x}p(y\vert x) M_{x}\Vert \sum_{x}p(y\vert x)N_{x}\right )\\
		&\le \dfrac{1}{d} \sum_{y}\sum_{x} D(p(y\vert x) M_{x}\Vert p(y\vert x) N_{x})\\
		&= \dfrac{1}{d} \sum_{y}\sum_{x}p(y\vert x)D(M_{x}\Vert N_{x})\\
		&=\dfrac{1}{d}\sum_{x} D(M_{x}\Vert N_{x})\\
		&=D_{m}(\mathcal{M}_{A}\Vert \mathcal{N}_{A}),
	\end{align}
	where the first inequality and the third line follow from
	\begin{gather}\label{key}
		D(P_{0}+P_{1}\Vert Q_{0}+Q_{1})\le D(P_{0}\Vert Q_{0})+D(P_{1}\Vert Q_{1}),\\
		D(\alpha P_{0}\Vert \beta Q_{0})=\alpha D(P_{0}\Vert Q_{0}) +(\alpha \log \alpha/\beta) \tr P_{0}
	\end{gather}
	for any positive semidefinite operators $ P_{0} $, $P_{1} $, $ Q_{0} $, and $ Q_{1}$, and $ \alpha, \beta > 0 $; the fourth line comes from $ \sum_{y}p(y\vert x) = 1 $ for all $ x $.
	
	\item The measurement relative entropy is additive for the tensor product:
	\begin{align}\label{key}
		& D_{m}(\mathcal{M}_{A}\otimes \mathcal{N}_{B}\Vert \mathcal{K}_{A}\otimes \mathcal{L}_{B})\nonumber\\
		&= \dfrac{1}{d^{2}} \sum_{x,y} D(M_{x}\otimes N_{y}\Vert K_{x}\otimes L_{y})\\
		&=\dfrac{1}{d^{2}} \sum_{x,y} \{ ( \tr_{B} N_{y} )D(M_{x}\Vert K_{x})\nonumber\\
		&\quad+( \tr_{A}M_{x} ) D(N_{y}\Vert L_{y}) \}\\
		&=\dfrac{1}{d} \sum_{x} D(M_{x}\Vert K_{x})+ \dfrac{1}{d} \sum_{y}D(N_{y}\Vert L_{y}) \\
		 &= D_{m}(\mathcal{M}_{A}\Vert \mathcal{K}_{A}) +D_{m}(\mathcal{N}_{B}\Vert \mathcal{L}_{B}).
	\end{align}
	
	\item The measurement relative entropy is jointly convex due to the joint convexity of the quantum relative entropy:
	\begin{align}
		& D_{m}(p\mathcal{M}_{A}+(1-p)\mathcal{N}_{A}\Vert p\mathcal{K}_{A}+(1-p)\mathcal{L}_{A}) \\
		&= \dfrac{1}{d}\sum_{x}D(p M_{x}+(1-p) N_{x}\Vert pK_{x}+(1-p)L_{x} )\\
		&\le \dfrac{1}{d}\sum_{x}\{pD(M_{x}\Vert K_{x} )+(1-p) D(N_{x}\Vert L_{x} )\}\\
		&= pD_{m}(\mathcal{M}_{A}\Vert \mathcal{K}_{A})+(1-p)D_{m}(\mathcal{N}_{A}\Vert \mathcal{L}_{A}).
	\end{align}
	\end{enumerate}
\end{proof}

Now we construct a quantum coherence and  quantum entanglement monotones for  quantum measurement channels using the measurement relative entropy as follows:
\begin{align}
	C_{m}(\mathcal{M}_{A}) &\coloneqq \min_{\mathcal{F}_{A}\in \mathbf{I}(d,n)} D_{m}(\mathcal{M}_{A}\Vert \mathcal{F}_{A}),\\
	E_{m}(\mathcal{M}_{AB})&\coloneqq \min_{\mathcal{F}_{AB}\in \sepm(A : B)} D_{m}(\mathcal{M}_{AB}\Vert \mathcal{F}_{AB}),
\end{align}
where $ \sepm(A\! :\! B) $ is the set of separable measurements.

The above resource monotones are non-negative and faithful since the quantum relative entropy is non-negative and faithful. The same holds for $ E_{m} $ for separable measurements. The quantum coherence monotone $ C_{m} $ is also monotone decreasing under any unital detection-incoherent (UDI) pre-processing channels and the classical post-processing channels: for a UDI channel $ \mathcal{E}_{A} $ and a classical post-processing channel $ \mathcal{S}_{R} $, it follows that
\begin{align}
	&C_{m}(\mathcal{S}_{R}\circ \mathcal{M}_{A}\circ \mathcal{E}_{A}) \nonumber\\
	&= \min_{\mathcal{F}_{A}\in \mathbf{I}(d,n)} D_{m} (\mathcal{S}_{R}\circ \mathcal{M}_{A}\circ \mathcal{E}_{A} \Vert \mathcal{F}_{A})\\
	&\le \min_{\mathcal{F}_{A}\in \mathbf{I}(d,n)} D_{m} (\mathcal{S}_{R}\circ \mathcal{M}_{A}\circ \mathcal{E}_{A} \Vert \mathcal{S}_{R}\circ \mathcal{F}_{A}\circ \mathcal{E}_{A})\\
	&\le \min_{\mathcal{F}_{A}\in \mathbf{I}(d,n)} D_{m} (\mathcal{M}_{A} \Vert \mathcal{F}_{A}),
\end{align}
where we used the monotonicity of $ D_{m} $ in the last inequality.

Note that the  quantum coherence monotone for measurement channels can be explicitly calculated:
\begin{proposition}
	The quantum coherence of a quantum measurement $ \mathcal{M}_{A}=\{M_{x}\} $ is given as follows:
	\begin{align}
		C_{m}(\mathcal{M}_{A})&=\dfrac{1}{d}\sum_{x} D(M_{x}\Vert \Delta M_{x})\\
		&= \dfrac{1}{d}\sum_{x} \left\{ S(\Delta M_{x})-S(M_{x}) \right\}\\
		&= \dfrac{1}{d}\sum_{x} p_{x} C_{r}(\rho_{x}),
	\end{align}
	where $ S(\cdot) $ is the von Neumann entropy, $ C_{r}(\rho) $ is the relative entropy of coherence for quantum states, and $ \rho_{x} \equiv M_{x}/\tr M_{x} $ for all $ x $.
\end{proposition}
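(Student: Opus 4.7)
The plan is to identify the optimal incoherent measurement in the definition of $C_{m}$ as the pointwise dephased copy $\mathcal{F}^{\star}\coloneqq\{\Delta M_{x}\}$ of $\mathcal{M}_{A}$, and then evaluate $D_{m}$ explicitly at that choice. First, I check that $\mathcal{F}^{\star}\in \mathbb{I}(d,n)$: positivity of each $\Delta M_{x}$ follows because $\Delta$ is completely positive, $\sum_{x}\Delta M_{x} = \Delta(\sum_{x} M_{x}) = \Delta(I_{A}) = I_{A}$ shows it is a valid POVM, and by construction every effect is diagonal in the incoherent basis. Plugging $\mathcal{F}^{\star}$ into $D_{m}$ gives the candidate value $\frac{1}{d}\sum_{x}D(M_{x}\Vert \Delta M_{x})$.

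The main step is the optimality argument. For any competitor $\{F_{x}\}\in \mathbb{I}(d,n)$, each $\log F_{x}$ is diagonal in the incoherent basis, so $\tr(M_{x}\log F_{x})=\tr(\Delta M_{x}\log F_{x})$, and similarly $\tr(M_{x}\log \Delta M_{x})=\tr(\Delta M_{x}\log \Delta M_{x})$. A short calculation then yields the pointwise identity
\begin{equation}
D(M_{x}\Vert F_{x})-D(M_{x}\Vert \Delta M_{x})=D(\Delta M_{x}\Vert F_{x}).
\end{equation}
Summing over $x$, the left-hand side measures the suboptimality of $\{F_{x}\}$, while the right-hand side, because both $\{\Delta M_{x}\}$ and $\{F_{x}\}$ are simultaneously diagonal, collapses to a sum of classical Kullback--Leibler divergences $\sum_{i} D_{\mathrm{KL}}(a^{i}\Vert f^{i})$ with $a^{i}_{x}=\bra{i}M_{x}\ket{i}$ and $f^{i}_{x}=\bra{i}F_{x}\ket{i}$; here $\sum_{x}a^{i}_{x}=\sum_{x}f^{i}_{x}=1$ for every $i$, so each term is nonnegative. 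This gives $D_{m}(\mathcal{M}_{A}\Vert \mathcal{F}^{\star})\le D_{m}(\mathcal{M}_{A}\Vert \mathcal{F}_{A})$, establishing the first equality and attaining the minimum at $\mathcal{F}^{\star}$.

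The remaining two equalities are routine rearrangements. The second follows from
\begin{equation}
D(M_{x}\Vert \Delta M_{x})=-S(M_{x})-\tr(\Delta M_{x}\log \Delta M_{x})=S(\Delta M_{x})-S(M_{x}),
\end{equation}
where the first step uses that $\log \Delta M_{x}$ is diagonal, so the trace against $M_{x}$ equals the trace against $\Delta M_{x}$. The third equality is immediate from the scaling identity $D(cA\Vert cB)=c\,D(A\Vert B)$ applied with $c=\tr M_{x}$, $A=\rho_{x}$, $B=\Delta \rho_{x}$, giving $D(M_{x}\Vert \Delta M_{x})=\tr(M_{x})\,C_{r}(\rho_{x})$ with $C_{r}$ the relative entropy of coherence of a state. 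The only step requiring actual thought is the optimality argument; once one sees that the diagonal structure decouples the minimization across incoherent basis indices into classical KL divergences, the rest is pure computation.
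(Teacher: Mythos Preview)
Your proof is correct. Both you and the paper establish that the minimizing incoherent measurement is $\{\Delta M_{x}\}$, but the routes differ. The paper normalizes each effect as $M_{x}=p_{x}\rho_{x}$ and $F_{x}=q_{x}\sigma_{x}$, splitting $D(M_{x}\Vert F_{x})=p_{x}D(\rho_{x}\Vert\sigma_{x})+p_{x}\log(p_{x}/q_{x})$; summing over $x$ yields a state-level term plus a classical divergence $D(\vec{p}\Vert\vec{q})$ of the trace vectors, and both pieces attain their individual minima simultaneously at $q_{x}=p_{x}$, $\sigma_{x}=\Delta\rho_{x}$. You instead invoke the Pythagorean identity $D(M_{x}\Vert F_{x})=D(M_{x}\Vert\Delta M_{x})+D(\Delta M_{x}\Vert F_{x})$, valid because $\Delta$ projects onto the commutative subalgebra containing $F_{x}$, and then show the excess term is a nonnegative sum $\sum_{i}D_{\mathrm{KL}}(a^{i}\Vert f^{i})$ by reindexing along the incoherent basis label $i$ rather than the outcome label $x$. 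Your argument is a bit more direct and avoids the normalization step; the paper's decomposition has the minor advantage of producing the form $\frac{1}{d}\sum_{x}p_{x}C_{r}(\rho_{x})$ immediately, without the separate scaling identity you use at the end.
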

\begin{proof}
	Let $ M_{x} = p_{x}\rho_{x} $ with $ p_{x}=\tr M_{x} $. During the derivation, we also denote $ F_{x} = q_{x} \sigma_{x} $ with $ q_{x} = \tr F_{x} $:
	\begin{align}
		C_{m}(\mathcal{M}_{A})&= \min_{\mathcal{F}_{A}\in \mathbf{I}(d,n)} D_{m}(\mathcal{M}_{A}\Vert \mathcal{F}_{A})\\
		&= \min_{\mathcal{F}_{A}\in \mathbf{I}(d,n)} \dfrac{1}{d}\sum_{x} D(M_{x}\Vert F_{x})\\
		&= \min_{\mathcal{F}_{A}\in \mathbf{I}(d,n)} \dfrac{1}{d}\sum_{x} D(p_{x}\rho_{x}\Vert q_{x} \sigma_{x})\\
		&= \min_{\mathcal{F}_{A}\in \mathbf{I}(d,n)} \dfrac{1}{d}\sum_{x} \left\{ p_{x}D(\rho_{x}\Vert  \sigma_{x}) +p_{x}\log \dfrac{p_{x}}{q_{x}}\right\}\\
		&= \min_{\mathcal{F}_{A}\in \mathbf{I}(d,n)} \dfrac{1}{d}\left\{ \sum_{x} p_{x}D(\rho_{x}\Vert  \sigma_{x}) + D(\vec{p}\Vert \vec{q}) \right\}.
	\end{align}
	The last line implies that the minimization is achieved by incoherent measurements $ \mathcal{F}_{A} $ such that $ \tr F_{x} =\tr M_{x}$, that is, $ q_{x}=p_{x} $ for all $ x $ due to the non-negativity of the quantum relative entropy. Applying this fact, we conclude that
	\begin{align}
		C_{m}(\mathcal{M}_{A})&= \dfrac{1}{d}\sum_{x}p_{x} D(\rho_{x}\Vert \Delta \rho_{x})\\
		&=\dfrac{1}{d}\sum_{x}D(M_{x}\Vert \Delta M_{x})\\
		&= \dfrac{1}{d}\sum_{x} \left\{ S( \Delta M_{x}) - S(M_{x}) \right\}.
	\end{align}
\end{proof}

As some examples of quantum measurements regarding quantum resources, a quantum measurement $ \mathcal{M}_{A}=\{ \ketbra{\pm}{\pm}_{A}: \ket{\pm}=\frac{1}{\sqrt{2}}(\ket{0}\pm \ket{1})\} $ has $ C_{m}(\mathcal{M}_{A})=1 $, while an incoherent measurement $ \mathcal{E}_{A}=\{\ketbra{0}{0}_{A},\ketbra{1}{1}_{A}\} $ has $ C_{m}(\mathcal{E}_{A}) =0 $.

For  quantum entanglement, the Bell measurement $ \mathcal{M}_{AB}=\{\Phi_{AB}^{\pm},\Psi_{AB}^{\pm}\} $ has $ E_{m}(\mathcal{M}_{AB})=1 $ with an optimal free measurement
	\begin{align}
	\mathcal{F}_{AB}= &\bigg\{ \frac{1}{2}(\ketbra{00}{00}_{AB}+\ketbra{11}{11}_{AB}),\nonumber\\
	&\;\;\frac{1}{2}(\ketbra{00}{00}_{AB}+\ketbra{11}{11}_{AB}),\nonumber\\
	&\;\;\frac{1}{2}(\ketbra{01}{01}_{AB}+\ketbra{10}{10}_{AB}),\nonumber\\
	&\;\;\frac{1}{2}(\ketbra{01}{01}_{AB}+\ketbra{10}{10}_{AB})\bigg \}.
	\end{align}
	
As another example, we consider a class of two-qubit Bell-diagonal measurements given by
\begin{multline}
    \mathcal{B}_{AB} = \{ \mathcal{U}_{A}(p_{1}\Phi_{AB}^{+} + p_{2}\Phi_{AB}^{-}+p_{3}\Psi_{AB}^{+}+p_{4}\Psi_{AB}^{-}):\\
    U_{A} \in \{I_{A}, \sigma_{A}^{X}, \sigma_{A}^{Y}, \sigma_{A}^{Z}\}\},
\end{multline}
where $ p_{1}, p_{2}, p_{3}, p_{4} \ge 0$, $\sum_{i=1}^{4} p_{i}=1$, and $ \sigma_{A}^{X}, \sigma_{A}^{Y}, \sigma_{A}^{Z}$ are the Pauli operators. Without loss of generality, we assume that $\max_{i} p_{i}=p_{1}$. Each POVM element is the Bell-diagonal state which is known to be entangled if and only if $ p_{1} > \frac{1}{2}$ \cite{horodecki1996InformationtheoreticAspectsInseparability,horodecki1997inseparable}. For $p_{1}> \frac{1}{2}$, one can compute the entanglement monotone of $\mathcal{B}_{AB}$ utilizing the relative entropy of entanglement for each POVM element \cite{vedral1997} as $E_{m} ( \mathcal{B}_{AB}) = 1 - h(p_{1}) $, where $h(p_{1}) = -p_{1} \log p_{1} - (1-p_{1}) \log (1- p_{1})$ is the binary entropy. An optimal separable measurement is given by
\begin{align}
    \mathcal{F}_{AB} &= \bigg\{ \mathcal{U}_{A}\bigg(\frac{1}{2}\Phi_{AB}^{+} + \dfrac{p_{2}}{2 (1 - p_{1})}\Phi_{AB}^{-}+\dfrac{p_{3}}{2 (1 - p_{1})}\Psi_{AB}^{+}\nonumber\\
    &\;\; +\dfrac{p_{4}}{2 (1 - p_{1})}\Psi_{AB}^{-}\bigg): U_{A} \in \{I_{A}, \sigma_{A}^{X}, \sigma_{A}^{Y}, \sigma_{A}^{Z}\}\bigg\}.
\end{align}
 
 An example of the above class is a two-qubit measurement given by
 \begin{equation}
     \mathcal{W}_{AB} = \left \{ p\Phi_{AB}^{\pm} + \frac{1-p}{4}I_{AB},\; p\Psi_{AB}^{\pm} + \frac{1-p}{4}I_{AB} \right \},
 \end{equation}
 where $ 0\le p\le 1$. The POVM elements of the measurement are equal to the Werner state up to local unitary operations so that each of them is known to be entangled for $ p> \frac{1}{3} $. The entanglement monotone of the measurement for $ p > \frac{1}{3} $ is computed as $ E_{m}(\mathcal{W}_{AB}) = 1 - h(\lambda) $, where $ \lambda = \frac{1+3p}{4} $; $ E_{m}(\mathcal{W}_{AB}) = 0 $ for $p \le \frac{1}{3} $. An optimal free POVM element for $ \mathcal{W}_{AB}$ is given by $ \left\{\frac{1}{3}\Phi_{AB}^{\pm} + \frac{1}{6}I_{AB}, \frac{1}{3}\Psi_{AB}^{\pm} + \frac{1}{6}I_{AB}\right\} $.
 
 Another example of the above class is a two-qubit measurement given by
 \begin{multline}
    \mathcal{I}_{AB} = \bigg\{ \mathcal{U}_{A}\left(p\Phi_{AB}^{+} + \frac{1-p}{3}(I_{AB}-\Phi_{AB}^{+})\right):\\
    U_{A} \in \{I_{A}, \sigma_{A}^{X}, \sigma_{A}^{Y}, \sigma_{A}^{Z}\}\bigg\},
\end{multline}
 where $ 0\le p\le 1$. The POVM elements of the measurement are equal to the isotropic state up to local unitary operations so that each of them is known to be entangled for $ p> \frac{1}{2} $. The entanglement monotone of the measurement for $ p > \frac{1}{2} $ is computed as $ E_{m}(\mathcal{I}_{AB}) = 1 - h( p ) $; $ E_{m}(\mathcal{I}_{AB}) = 0 $ for $p \le \frac{1}{2} $. An optimal free POVM element for $ \mathcal{I}_{AB}$ is given by
\begin{multline}
    \mathcal{F}_{AB} = \bigg\{ \mathcal{U}_{A}\left(\dfrac{1}{2}\Phi_{AB}^{+} + \dfrac{1}{6}(I_{AB}-\Phi_{AB}^{+})\right):\\
    U_{A} \in \{I_{A}, \sigma_{A}^{X}, \sigma_{A}^{Y}, \sigma_{A}^{Z}\}\bigg\}.
\end{multline}.

\section{Quantum coherence conversion to quantum entanglement}
The quantum coherence of a measurement $ \mathcal{M}_{A}$ upper-bounds the quantum entanglement of a composite measurement that is constructed from $ \mathcal{M}_{A}$ using free resources:
\begin{theorem}\label{thm: C_m >= E_m}
	Let $ \mathcal{M}_{A} \in \mathbf{M}(d,n)$ be a quantum measurement. For any ancillary incoherent measurement $ \mathcal{E}_{B}\in \mathbf{I}(d,n) $ and a unital detection-incoherent pre-processing channel $ \mathcal{N}_{AB}$, it holds that
	\begin{equation}\label{key}
		C_{m}(\mathcal{M}_{A})\ge E_{m} (\mathcal{M}_{A}\otimes \mathcal{E}_{B} \circ \mathcal{N}_{AB}) .
	\end{equation}
\end{theorem}
\begin{proof}
	Let an optimal incoherent measurement for $ C_{m}(\mathcal{M}_{A}) $ be $ \mathcal{F}_{A}^{\ast} $. It follows that
	\begin{align}
		C_{m}(\mathcal{M}_{A})&= \min_{\mathcal{F}_{A}\in \mathbf{I}(d,n)} D_{m}(\mathcal{M}_{A}\Vert \mathcal{F}_{A})\\
		&=D_{m}(\mathcal{M}_{A}\Vert \mathcal{F}_{A}^{\ast})\\
		&=D_{m}(\mathcal{M}_{A}\otimes \mathcal{E}_{B}\Vert \mathcal{F}_{A}^{\ast}\otimes \mathcal{E}_{B})\\
		&\ge D_{m}(\mathcal{M}_{A}\otimes \mathcal{E}_{B} \circ \mathcal{N}_{AB}\Vert \mathcal{F}_{A}^{\ast}\otimes \mathcal{E}_{B}\circ \mathcal{N}_{AB})\\
		&\ge \min_{\mathcal{F}_{AB}' \in \sepm(A:B)} D_{m}(\mathcal{M}_{A}\otimes \mathcal{E}_{B} \circ \mathcal{N}_{AB}\Vert \mathcal{F}_{AB}')\\
		&=E_{m} (\mathcal{M}_{A}\otimes \mathcal{E}_{B} \circ \mathcal{N}_{AB}),
	\end{align}
	where we used the fact that $ \mathcal{F}_{A}^{\ast}\otimes \mathcal{E}_{B}\circ \mathcal{N}_{AB}\in \mathbf{I}(d\times d, n\times n)\subset \sepm(A\! :\! B) $ in the last inequality.
\end{proof}
Note that it is unnecessary to consider a classical post-processing channel since it does not increase quantum entanglement.

Before moving into the main result, we extend the relative entropy of entanglement for bipartite states to positive semidefinite bipartite operators, or unnormalized bipartite states in other words. Recall that the von Neumann entropy and the quantum relative entropy are defined over positive semidefinite operators \cite{watrous2018TTQI}:
\begin{multline}\label{key}
	E_{R}(X_{AB}) \coloneqq \min \{D(X_{AB}\Vert Y_{AB}) :\\
	Y_{AB}\in \sep(A:B), \tr_{AB} Y_{AB}= \tr_{AB} X_{AB}\},
\end{multline}
where $ \sep(A\!:\!B) $ denotes the set of separable operators. We first extend some of the results in \cite{plenio2000OperatorMonotonesReduction} to the set of positive semidefinite operators:
\begin{lemma}
	For a positive semidefinite operator $ X_{AB} $ and a separable operator $ Y_{AB} $, it holds that
	\begin{gather}
		S(X_{A}) - S(X_{AB}) \le D(X_{AB}\Vert Y_{AB}) - D(X_{A}\Vert Y_{A}),\\
		S(X_{B}) - S(X_{AB}) \le D(X_{AB}\Vert Y_{AB}) - D(X_{B}\Vert Y_{B}).
	\end{gather}
\end{lemma}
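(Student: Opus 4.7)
The plan is to reduce the inequality to a single operator inequality obtained from the separable structure of $Y_{AB}$. Write $Y_{AB} = \sum_i c_i \tau_A^i \otimes \tau_B^i$ with $c_i \ge 0$ and $\tau_A^i, \tau_B^i$ normalized states. Since $\tau_B^i \le I_B$ for each $i$, this gives the operator inequality $Y_{AB} \le Y_A \otimes I_B$, where $Y_A = \tr_B Y_{AB}$. Applying the operator monotonicity of the logarithm (Löwner), one obtains $\log Y_{AB} \le \log Y_A \otimes I_B$.

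Taking the trace against the positive operator $X_{AB}$ preserves the inequality, yielding
\begin{equation*}
\tr\!\left[X_{AB}\log Y_{AB}\right] \le \tr\!\left[X_{AB}(\log Y_A \otimes I_B)\right] = \tr\!\left[X_A \log Y_A\right],
\end{equation*}
where the last equality uses $X_A = \tr_B X_{AB}$ and the fact that $\log Y_A \otimes I_B$ acts trivially on $B$. Then the definitions $S(P) = -\tr[P\log P]$ and $D(P\Vert Q) = \tr[P\log P] - \tr[P\log Q]$ give
\begin{equation*}
D(X_{AB}\Vert Y_{AB}) = -S(X_{AB}) - \tr[X_{AB}\log Y_{AB}] \ge -S(X_{AB}) - \tr[X_A\log Y_A] = -S(X_{AB}) + S(X_A) + D(X_A\Vert Y_A),
\end{equation*}
which is exactly the first inequality. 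The second one follows by exchanging the roles of $A$ and $B$.

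The case $\text{im}\, X_{AB} \not\subseteq \text{im}\, Y_{AB}$ is handled trivially because the right-hand side is then $+\infty$; under the complementary assumption, partial-trace monotonicity ($P \le \lambda Q$ for some $\lambda > 0$, whence $P_A \le \lambda Q_A$) ensures $\text{im}\, X_A \subseteq \text{im}\, Y_A$ so that $D(X_A\Vert Y_A)$ is also finite.

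The one technical subtlety, and the main thing to watch, is applying operator monotonicity of $\log$ when $Y_{AB}$ or $Y_A$ is singular: the cleanest route is to regularize $Y_{AB} \mapsto Y_{AB} + \epsilon I_{AB}$ (so $Y_A \mapsto Y_A + \epsilon d\, I_A$ up to a trivial adjustment), run the argument in the strictly positive regime, and take $\epsilon \to 0^+$, using continuity of the quantum relative entropy on the image of $Y_{AB}$. Apart from this regularization, the proof is a direct chain of operator inequalities without invoking any deeper structure.
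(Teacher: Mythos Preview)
Your proof is correct and follows essentially the same route as the paper: both arguments hinge on the single operator inequality $Y_{AB}\le Y_{A}\otimes I_{B}$, then apply operator monotonicity of $\log$, trace against $X_{AB}$, and rearrange. The only cosmetic difference is that the paper obtains $Y_{A}\otimes I_{B}-Y_{AB}\ge 0$ by invoking the reduction criterion (positivity of $\mathsf{Id}_A\otimes\Lambda_B$ on separable operators, with $\Lambda_B(Z)=\tr_B(Z)I_B-Z$), whereas you derive it directly from the separable decomposition via $\tau_B^i\le I_B$; these are the same fact, and your treatment of the singular/support case via regularization is a welcome extra bit of care that the paper leaves implicit.
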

\begin{proof}
	The map $ \Lambda_{B}(Z_{B})=\tr_{B} (Z_{B})I_{B}-Z_{B} $ is positive but not completely positive \cite{horodecki1999reduction}. Since $ Y_{AB} $ is separable, it is undistillable so that it satisfies $ \mathsf{Id}_{A}\otimes \Lambda_{B} (Y_{AB})= Y_{A}\otimes I_{B}- Y_{AB}\ge 0 $, where $ \mathsf{Id}_{A}$ is the identity channel. From this, we have that
	\begin{gather}
		\log Y_{A}\otimes I_{B}\ge \log Y_{AB},\\
		\tr_{AB} X_{AB} \log Y_{A}\otimes I_{B}\ge \tr_{AB} X_{AB}\log Y_{AB},\\
		-S(X_{AB})+S(X_{A})-S(X_{A})-\tr_{AB} X_{AB} \log Y_{A}\otimes I_{B}\nonumber\\
		\le -S(X_{AB})-\tr_{AB} X_{AB}\log Y_{AB},\\
		S(X_{A})-S(X_{AB}) \le D(X_{AB}\Vert Y_{AB}) - D(X_{A}\Vert Y_{A}).
	\end{gather}

	The second one can be derived similarly.
\end{proof}
\begin{lemma}\label{lem: E_R lower-bound by the condition entropy}
	For a positive semidefinite matrix $ X_{AB} $, it holds that
	\begin{equation}\label{key}
		E_{R} (X_{AB})\ge \max\{S(X_{A})-S(X_{AB}), S(X_{B})-S(X_{AB})\}
	\end{equation}
\end{lemma}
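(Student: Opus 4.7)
The plan is to combine the previous lemma with a short nonnegativity observation for the reduced relative entropy. First, I would fix any separable operator $Y_{AB}\in\sep(A:B)$ satisfying the trace constraint $\tr Y_{AB} = \tr X_{AB}$ that appears in the definition of $E_R(X_{AB})$, so that $Y_{AB}$ is an admissible candidate in the minimization. The previous lemma then gives
\begin{equation*}
    S(X_A) - S(X_{AB}) \le D(X_{AB}\Vert Y_{AB}) - D(X_A\Vert Y_A).
\end{equation*}

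Next, I would observe that tracing out $B$ on both sides of the trace constraint yields $\tr Y_A = \tr Y_{AB} = \tr X_{AB} = \tr X_A$, so $X_A$ and $Y_A$ are positive semidefinite operators with the same (not necessarily unit) trace. Using the rescaling identity $D(\alpha P\Vert \alpha Q) = \alpha D(P\Vert Q)$ for $\alpha>0$ and $P,Q\ge 0$, together with the standard Klein inequality for normalized states, I get $D(X_A\Vert Y_A)\ge 0$. Substituting back into the displayed bound gives
\begin{equation*}
    S(X_A) - S(X_{AB}) \le D(X_{AB}\Vert Y_{AB})
\end{equation*}
for every admissible $Y_{AB}$, and minimizing the right-hand side over such $Y_{AB}$ produces $S(X_A) - S(X_{AB}) \le E_R(X_{AB})$.

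The companion bound $S(X_B) - S(X_{AB}) \le E_R(X_{AB})$ follows at once from the $B$-version of the previous lemma by the exact same positivity argument applied to $D(X_B\Vert Y_B)$. Taking the maximum of the two inequalities concludes the proof.

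There is no real obstacle here once the previous lemma is in hand; the only point to watch is that the trace-matching constraint built into the definition of $E_R(X_{AB})$ is precisely what forces $\tr Y_A = \tr X_A$ and thereby secures the nonnegativity of $D(X_A\Vert Y_A)$. Without this matching condition the subtracted term in the previous lemma could be negative and the argument would not close, so the proposal leans crucially on the way $E_R$ is extended to unnormalized operators in the excerpt.
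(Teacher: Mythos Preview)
Your proposal is correct and follows essentially the same route as the paper: apply the preceding lemma to an admissible separable $Y_{AB}$, drop the nonnegative term $D(X_A\Vert Y_A)$, and take the optimum. The paper phrases it by fixing the optimal $Y_{AB}^{\ast}$ at the outset rather than minimizing at the end, but the argument is the same; your explicit remark that the trace-matching constraint is what guarantees $D(X_A\Vert Y_A)\ge 0$ is a useful clarification that the paper leaves implicit.
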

\begin{proof}
	Let $ E_{R}(X_{AB}) = D(X_{AB}\Vert Y_{AB}^{\ast}) $. Then
	\begin{align}
		S(X_{A})-S(X_{AB})&\le D(X_{AB}\Vert Y_{AB}^{\ast}) - D(X_{A}\Vert Y_{A}^{\ast})\\
		&\le D(X_{AB}\Vert Y_{AB}^{\ast})\\
		&=E_{R}(X_{AB}).
	\end{align}
	The remaining one can be shown similarly.
\end{proof}

Upon the above lemmata, we obtain the following result:
\begin{lemma}\label{lem: E_m >= C_m with cnot}
	Let $ \mathcal{M}_{A}\in \mathbf{M}(d,n) $ be a quantum measurement and $ \mathcal{U}_{\cnot}=\sum_{i,j}\ket{i,j\oplus i}\!\bra{i,j} $ the generalized $ \mathrm{CNOT} $ gate \footnote{$ \oplus $ is addition mod $ d $.}. Let $ \mathcal{E}_{B} \in \mathbf{I}(d,n)$ be an incoherent measurement given by
	\begin{equation}\label{key}
		\mathcal{E}_{B}=\begin{cases}
			\{E_{0},\dots,E_{d-1},0,\dots,0 \} & n\ge d,\\
			\{E_{0},\dots,E_{n-2}, I_{B}-\sum_{x=0}^{n-2}E_{x} \} & n < d,
		\end{cases}
	\end{equation}
	where $ E_{x}=\ketbra{x}{x}_{B} $. The following holds:
	\begin{equation}\label{key}
		E_{m}(\mathcal{M}_{A}\otimes \mathcal{E}_{B}\circ \mathcal{U}_{\cnot}^{\dag}) \ge \begin{cases}
			C_{m}(\mathcal{M}_{A}) & n\ge d,\\
			\dfrac{n-1}{d}C_{m}(\mathcal{M}_{A}) & n<d.
		\end{cases}
	\end{equation}
\end{lemma}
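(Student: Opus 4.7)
The plan is to compute the POVM elements of $\mathcal{M}_A \otimes \mathcal{E}_B \circ \mathcal{U}_{\cnot}^\dag$ explicitly and then lower-bound the measurement relative entropy to any separable measurement using the conditional-entropy bound on the relative entropy of entanglement from Lemma~3. Since a pre-processing channel acts on POVM elements through its adjoint, the elements of the composite measurement are $N_{xy} = U_{\cnot}(M_x \otimes E_y) U_{\cnot}^\dag$. Whenever $E_y = \ketbra{y}{y}_B$ (which covers $y \in \{0,\dots,d-1\}$ in the case $n \ge d$, and $y \in \{0,\dots,n-2\}$ in the case $n < d$), the isometry $V_y : \ket{i}_A \mapsto \ket{i, y \oplus i}_{AB}$ produces the clean factorization $N_{xy} = V_y M_x V_y^\dag$. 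A short direct computation then gives $\tr_B N_{xy} = \Delta M_x$ and $S(N_{xy}) = S(M_x)$, and the analogous identities hold for the normalized state $\rho_{xy} = V_y \rho_x V_y^\dag$ with $\rho_x = M_x / \tr M_x$.

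Next, for any separable measurement $\mathcal{F}_{AB} = \{F_{xy}\}$, writing $N_{xy} = p_{xy}\rho_{xy}$ and $F_{xy} = q_{xy}\sigma_{xy}$ and applying the scalar-probability split already used in the proof of Proposition~3 gives
\begin{equation*}
    \sum_{x,y} D(N_{xy}\Vert F_{xy}) = \sum_{x,y} p_{xy}\, D(\rho_{xy}\Vert \sigma_{xy}) + D(\vec{p}\Vert \vec{q}).
\end{equation*}
The measurement constraints $\sum_{x,y} N_{xy} = \sum_{x,y} F_{xy} = I_{AB}$ force $\sum p_{xy} = \sum q_{xy} = d^2$, so the classical term $D(\vec{p}\Vert \vec{q})$ is non-negative and may be discarded. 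Each $\sigma_{xy}$ is a normalized separable state, hence $D(\rho_{xy}\Vert \sigma_{xy}) \ge E_R(\rho_{xy})$, and Lemma~3 yields $E_R(\rho_{xy}) \ge S((\rho_{xy})_A) - S(\rho_{xy}) = S(\Delta\rho_x) - S(\rho_x)$.

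Finally I would sum. Using $p_{xy} = \tr M_x = p_x$ for every rank-one $E_y$, together with the explicit formula $C_m(\mathcal{M}_A) = \frac{1}{d}\sum_x p_x[S(\Delta \rho_x) - S(\rho_x)]$ from Proposition~3, in the case $n \ge d$ there are exactly $d$ contributing values of $y$, giving $\frac{1}{d^2}\cdot d \cdot \sum_x p_x[S(\Delta\rho_x)-S(\rho_x)] = C_m(\mathcal{M}_A)$ after the final division by $d^2$. In the case $n < d$ I would keep only the $n-1$ rank-one terms and lower-bound the leftover $y=n-1$ contribution by zero via non-negativity of $D(\cdot\Vert\cdot)$, producing the prefactor $(n-1)/d$.

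The main obstacle is the bookkeeping of normalizations. One has to notice that the measurement constraint $\sum F_{xy} = I_{AB}$ is precisely what prevents an arbitrary rescaling of the separable $F_{xy}$ (which would otherwise drive $D(N_{xy}\Vert F_{xy})$ to $-\infty$) and hence what keeps the classical relative entropy piece non-negative. One also has to be careful that in the $n<d$ case the isometric picture $N_{xy} = V_y M_x V_y^\dag$ breaks down for $y = n-1$ because $E_{n-1} = I_B - \sum_{x=0}^{n-2} E_x$ is not rank one; simply dropping that term, rather than trying to bound it sharply, is what produces the factor $(n-1)/d$ rather than $1$.
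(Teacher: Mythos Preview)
Your proof is correct and follows essentially the same route as the paper: compute the composite POVM elements $\mathcal{U}_{\cnot}(M_x\otimes E_y)$, reduce termwise to the relative entropy of entanglement, and invoke the conditional-entropy lower bound of Lemma~3 together with $\tr_B N_{xy}=\Delta M_x$ and $S(N_{xy})=S(M_x)$. The only cosmetic differences are that the paper uses the local-shift identity $\mathcal{U}_{\cnot}(M_x\otimes E_y)=(\mathsf{Id}_A\otimes\mathcal{S}_y)\,\mathcal{U}_{\cnot}(M_x\otimes E_0)$ in place of your isometry $V_y$, and passes directly to the unnormalized $E_R$ rather than your explicit scalar split---your treatment of the trace constraint via $D(\vec p\Vert\vec q)\ge 0$ is, if anything, more transparent than the paper's bare assertion that the minimum over separable measurements equals the termwise sum of $E_R$'s.
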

\begin{proof}
	Note that the composite measurement consisting of $ \mathcal{M}_{A}\in \mathbf{I}(d,n) $ and $ \mathcal{N}_{B}\in \mathbf{I}(d,n) $ is an element of $ \mathbf{I}(d\times d, n\times n) $. $ \mathcal{U}_{\cnot}^{\dag} $ is a unital detection-incoherent channel since its adjoint channel is a maximally incoherent operation. The case of $ n\ge d $ can be proven as follows:
	\begin{align}
		&E_{m}(\mathcal{M}_{A}\otimes \mathcal{E}_{B}\circ \mathcal{U}_{\cnot}^{\dag}) \nonumber\\
		&=\min_{\mathcal{F}_{AB}\in \sepm(A:B)} D_{m}\left (\mathcal{M}_{A}\otimes \mathcal{E}_{B}\circ \mathcal{U}_{\cnot}^{\dag}\Vert \mathcal{F}_{AB}\right )\\
		&=\min_{\mathcal{F}_{AB}\in \sepm(A:B)}\dfrac{1}{d^{2}}D\left( \oplus_{x, y}\mathcal{U}_{\cnot}(M_{x}\otimes E_{y})\Vert \oplus_{x, y} F_{x y} \right)\\
		&=\min_{\mathcal{F}_{AB}\in \sepm(A:B)} \dfrac{1}{d^{2}}\sum_{x, y=0}^{n-1} D\left( \mathcal{U}_{\cnot}(M_{x}\otimes E_{y})\Vert  F_{x y} \right)\\
		&\ge \dfrac{1}{d^{2}}\sum_{x, y=0}^{n-1} E_{R}(\mathcal{U}_{\cnot}(M_{x}\otimes E_{y}))\\
		&= \dfrac{1}{d}\sum_{x=0}^{n-1} E_{R}(\mathcal{U}_{\cnot}(M_{x}\otimes E_{0}))\\
		&\ge \dfrac{1}{d}\sum_{x=0}^{n-1} \left\{ S(\Delta M_{x}) - S(M_{x}) \right\}\\
		&= C_{m}(\mathcal{M}_{A}),
	\end{align}
	where the fifth line follows from the fact that $ E_{R}(\mathcal{U}_{\cnot}(M_{x}\otimes E_{y}))=E_{R}(\mathcal{U}_{\cnot}(M_{x}\otimes E_{0})) $ for all $ y $ because of
	\begin{equation}\label{key}
		\mathcal{U}_{\cnot}(M_{x}\otimes E_{y})=\mathsf{Id}_{A}\otimes \mathcal{S}_{y}\circ\mathcal{U}_{\cnot}(M_{x}\otimes E_{0})
	\end{equation}
	with the (unitary) shift channel $ S_{y}=\sum_{i}\ket{i\oplus y}\!\bra{i} $ (or the generalized Pauli $ X $ channel); the inequality follows from Lemma~\ref{lem: E_R lower-bound by the condition entropy}. For $ n<d $, it can be seen in a similar way:
	\begin{align}
		&E_{m}(\mathcal{M}_{A}\otimes \mathcal{E}_{B}\circ \mathcal{U}_{\cnot}^{\dag}) \nonumber\\
		&=\min_{\mathcal{F}_{AB}\in \sepm(A:B)} D_{m}\left (\mathcal{M}_{A}\otimes \mathcal{E}_{B}\circ \mathcal{U}_{\cnot}^{\dag}\Vert \mathcal{F}_{AB}\right )\\
		&\ge \dfrac{1}{d^{2}}\sum_{x, y=0}^{n-1} E_{R}(\mathcal{U}_{\cnot}(M_{x}\otimes E_{0}))\\
		&\ge \dfrac{n-1}{d^{2}}\sum_{x=0}^{n-1} E_{R}(\mathcal{U}_{\cnot}(M_{x}\otimes E_{0}))\\
		&\ge \dfrac{n-1}{d^{2}}\sum_{x=0}^{n-1} \left\{ S(\Delta M_{x}) - S(M_{x}) \right\}\\
		&= \dfrac{n-1}{d} C_{m}(\mathcal{M}_{A}).
	\end{align}
	 This completes the proof.
\end{proof}

Note that for information complete measurements, it holds that $ n\ge d^{2} $. Upon the above results, we arrive at the main result \footnote{Classical post-processing channels are unnecessary since they do not increase quantum coherence or quantum entanglement.}:
\begin{theorem}\label{thm: meas. coh. ent. equality}
	Let $ \mathcal{M}_{A}\in \mathbf{M}(d,n) $ be a quantum measurement. Let $ \mathcal{E}_{B} \in \mathbf{I}(d,n)$ be an incoherent measurement given by
	\begin{equation}\label{key}
		\mathcal{E}_{B}=\begin{cases}
			\{E_{0},\dots,E_{d-1},0,\dots,0 \} & n\ge d,\\
			\{E_{0},\dots,E_{n-2}, I_{B}-\sum_{x=0}^{n-2}E_{x} \} & n < d,
		\end{cases}
	\end{equation}
	where $ E_{x}=\ketbra{x}{x}_{B} $.
	For $ n\ge d $, the following holds:
	\begin{equation}\label{key}
		\sup_{\mathcal{N}_{AB}\in \mathbf{UDI}} E_{m}(\mathcal{M}_{A}\otimes \mathcal{E}_{B}\circ \mathcal{N}_{AB}) = C_{m}(\mathcal{M}_{A}),
	\end{equation}
	where $ \mathbf{UDI} $ denotes the set of unital detection incoherent channels: an optimal pre-processing channel $ \mathcal{N}_{AB} $ is given by the adjoint channel of the generalized $ \mathrm{CNOT} $ gate.
	For $ n <  d $, the following holds:
        \begin{multline}
			\dfrac{n-1}{d}C_{m}(\mathcal{M}_{A})\le\\
			\sup_{\mathcal{N}_{AB}\in \mathbf{UDI}} E_{m}(\mathcal{M}_{A}\otimes \mathcal{E}_{B}\circ \mathcal{N}_{AB})\\
			\le C_{m}(\mathcal{M}_{A}).
		\end{multline}
\end{theorem}
\begin{proof}
	Theorem \ref{thm: C_m >= E_m} shows that
	\begin{equation}\label{key}
		E_{m}(\mathcal{M}_{A}\otimes \mathcal{E}_{B}\circ \mathcal{N}_{AB})\le C_{m}(\mathcal{M}_{A})
	\end{equation}
	for any unital detection-incoherent channel $ \mathcal{N}_{AB} $. On the other hand, using $ \mathcal{U}_{\cnot}^{\dag} $ as the preprocessing channel, Lemma \ref{lem: E_m >= C_m with cnot} indicates that
	\begin{equation}\label{key}
		E_{m}(\mathcal{M}_{A}\otimes \mathcal{E}_{B}\circ \mathcal{U}_{\cnot}^{\dag})\begin{cases}
			\ge C_{m}(\mathcal{M}_{A}) & n\ge d,\\
			\ge \dfrac{n-1}{d}C_{m}(\mathcal{M}_{A}) & n< d.
		\end{cases}
	\end{equation}
	Combining the two results completes the proof.
\end{proof}

\section{Coherence monotones from entanglement monotones\\
}
A quantum entanglement monotone of quantum measurements induces a quantum coherence monotone of quantum measurements. We require that a quantum coherence monotone $ C $ satisfies the following conditions:
\begin{enumerate}
	\item $ C(\mathcal{N}_{A}) \ge 0$; $ C(\mathcal{N}_{A}) = 0$ if and only if $ \mathcal{N}_{A}\in \mathbf{I}(d,n) $,
	\item $ C(\mathcal{S}_{R}\circ \mathcal{N}_{A}\circ \mathcal{F}_{A})\le C(\mathcal{N}_{A}) $ for any pre-processing channel $ \mathcal{F}_{A} \in \mathbf{UDI} $ and a classical post-processing channel $ \mathcal{S}_{R} $,
	\item $ C\left (\sum_{i}p_{i}\mathcal{N}_{A}^{(i)}\right )\le \sum_{i}p_{i} C\left (\mathcal{N}_{A}^{(i)}\right ) $.
\end{enumerate}
We require similar conditions for a quantum entanglement monotone $ E $ as well:
\begin{enumerate}
	\item $ E(\mathcal{N}_{AB}) \ge 0$; $ E(\mathcal{N}_{AB}) = 0$ if and only if $ \mathcal{N}_{AB}\in \sepm(A\! : \! B) $,
	\item $ E(\mathcal{S}_{R}\circ \mathcal{N}_{AB}\circ \mathcal{F}_{AB})\le E(\mathcal{N}_{AB}) $ for any pre-processing channel $ \mathcal{F}_{AB} $ that does not generate quantum entanglement from $ \sepm(A\! : \! B) $ and a classical post-processing channel $ \mathcal{S}_{R} $ acting on the system $ A $ and $ B $,
	\item $ E\left (\sum_{i}p_{i}\mathcal{N}_{AB}^{(i)}\right )\le \sum_{i}p_{i} E\left (\mathcal{N}_{AB}^{(i)}\right ) $.
\end{enumerate}

The following result establishes the existence of the induced quantum coherence monotone for quantum measurements:
\begin{theorem}\label{thm: coh. monotone via ent. monotone}
	Let $ \mathcal{M}_{A}\in \mathbf{M}(d,n) $ be a quantum measurement. Let $ \mathcal{E}_{B} \in \mathbf{I}(d,n)$ be an incoherent measurement given by
	\begin{equation}\label{key}
		\mathcal{E}_{B}=\begin{cases}
			\{E_{0},\dots,E_{d-1},0,\dots,0 \} & n\ge d,\\
			\{E_{0},\dots,E_{n-2}, I_{B}-\sum_{x=0}^{n-2}E_{x} \} & n < d,
		\end{cases}
	\end{equation}
	where $ E_{x}=\ketbra{x}{x}_{B} $.
	For $ n>1 $, a quantum entanglement monotone $ E $ for quantum measurements induces a quantum coherence monotone for quantum measurements as follows:
	\begin{equation}\label{key}
		C(\mathcal{M}_{A}) \coloneqq \sup_{\mathcal{F}_{AB}\in \mathbf{UDI}} E(\mathcal{M}_{A}\otimes \mathcal{E}_{B}\circ \mathcal{F}_{AB}).
	\end{equation}
\end{theorem}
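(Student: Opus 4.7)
The plan is to check the three defining properties for $C$---non-negativity together with faithfulness, monotonicity under the free operations, and convexity---by lifting the corresponding properties of $E$ through the construction $\mathcal{M}_A\mapsto \mathcal{M}_A\otimes\mathcal{E}_B\circ\mathcal{F}_{AB}$, with Theorem~\ref{thm: meas. coh. ent. equality} supplying the only non-routine step, namely the converse of faithfulness.

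Non-negativity is immediate since $C(\mathcal{M}_A)$ is a supremum of non-negative quantities. If $\mathcal{M}_A\in\mathbb{I}(d,n)$, then $\mathcal{M}_A\otimes\mathcal{E}_B$ is a bipartite incoherent measurement, and the proposition characterizing UDI pre-processings ensures that any $\mathcal{F}_{AB}\in\mathrm{UDI}$ preserves incoherence; the transformed measurement is therefore incoherent and hence separable, so $E=0$ uniformly and $C(\mathcal{M}_A)=0$. For the converse, if $\mathcal{M}_A\notin\mathbb{I}(d,n)$ then $C_m(\mathcal{M}_A)>0$, and choosing $\mathcal{F}_{AB}=\mathcal{U}_{\mathrm{CNOT}}^{\dag}$ Theorem~\ref{thm: meas. coh. ent. equality} gives
\begin{equation*}
    E_m(\mathcal{M}_A\otimes\mathcal{E}_B\circ\mathcal{U}_{\mathrm{CNOT}}^{\dag})\ge \tfrac{n-1}{d}\,C_m(\mathcal{M}_A)>0
\end{equation*}
(using $n>1$), which certifies that $\mathcal{M}_A\otimes\mathcal{E}_B\circ\mathcal{U}_{\mathrm{CNOT}}^{\dag}\notin\sepm(A\!:\!B)$; faithfulness of $E$ then yields $E(\mathcal{M}_A\otimes\mathcal{E}_B\circ\mathcal{U}_{\mathrm{CNOT}}^{\dag})>0$, hence $C(\mathcal{M}_A)>0$.

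For monotonicity, for any UDI $\mathcal{F}_A$ on $A$, any classical $\mathcal{S}_R$, and any $\mathcal{F}_{AB}\in\mathrm{UDI}$ I would use the identity
\begin{equation*}
    (\mathcal{S}_R\circ\mathcal{M}_A\circ\mathcal{F}_A)\otimes\mathcal{E}_B\circ\mathcal{F}_{AB}
    =(\mathcal{S}_R\otimes\mathsf{Id}_{R'})\circ(\mathcal{M}_A\otimes\mathcal{E}_B)\circ\mathcal{F}_{AB}',
\end{equation*}
with $\mathcal{F}_{AB}':=(\mathcal{F}_A\otimes\mathsf{Id}_B)\circ\mathcal{F}_{AB}$. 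A quick check using $\Delta_{AB}=\Delta_A\otimes\Delta_B$ shows that $\mathcal{F}_A\otimes\mathsf{Id}_B$ is UDI on $AB$ and that the composition of two UDI channels is UDI, so $\mathcal{F}_{AB}'\in\mathrm{UDI}$. Applying monotonicity of $E$ under the bipartite classical post-processing $\mathcal{S}_R\otimes\mathsf{Id}_{R'}$, and then upper-bounding $E((\mathcal{M}_A\otimes\mathcal{E}_B)\circ\mathcal{F}_{AB}')$ by $\sup_{\mathcal{H}_{AB}\in\mathrm{UDI}} E(\mathcal{M}_A\otimes\mathcal{E}_B\circ\mathcal{H}_{AB})=C(\mathcal{M}_A)$, and finally taking the supremum over $\mathcal{F}_{AB}$ on the left, yields $C(\mathcal{S}_R\circ\mathcal{M}_A\circ\mathcal{F}_A)\le C(\mathcal{M}_A)$. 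Convexity follows because $\mathcal{M}_A\mapsto \mathcal{M}_A\otimes\mathcal{E}_B\circ\mathcal{F}_{AB}$ is affine: convexity of $E$ provides a pointwise bound in $\mathcal{F}_{AB}$, and the elementary inequality $\sup_x\sum_i a_i(x)\le\sum_i\sup_x a_i(x)$ delivers $C\bigl(\sum_i p_i\mathcal{M}_A^{(i)}\bigr)\le\sum_i p_i C\bigl(\mathcal{M}_A^{(i)}\bigr)$.

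The main obstacle is the converse direction of faithfulness: an abstract entanglement monotone $E$ only detects separability through its zero set, so one cannot argue directly from $\mathcal{M}_A\notin\mathbb{I}(d,n)$ that $E(\mathcal{M}_A\otimes\mathcal{E}_B\circ\mathcal{F}_{AB})>0$ for some $\mathcal{F}_{AB}$. The resolution is to invoke Theorem~\ref{thm: meas. coh. ent. equality} with the generalized CNOT purely to certify non-separability of the transformed measurement \emph{independently of the choice of $E$}, and only then appeal to faithfulness of $E$. This is also exactly what pins down the standing hypothesis $n>1$, since the prefactor $\tfrac{n-1}{d}$ becomes vacuous at $n=1$, consistent with the fact that the only $n=1$ measurement, $\{I_A\}$, carries no coherence at all.
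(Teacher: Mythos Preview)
Your proposal is correct and follows essentially the same route as the paper's proof: verify non-negativity and faithfulness, monotonicity, and convexity in turn, using that $\mathcal{F}_A\otimes\mathsf{Id}_B$ composed with a UDI channel is again UDI for the monotonicity step and invoking Theorem~\ref{thm: meas. coh. ent. equality} for the converse of faithfulness. You are in fact more explicit than the paper at that converse step---spelling out that the bound $E_m(\mathcal{M}_A\otimes\mathcal{E}_B\circ\mathcal{U}_{\mathrm{CNOT}}^{\dag})\ge\tfrac{n-1}{d}C_m(\mathcal{M}_A)>0$ certifies non-separability, which then transfers to the arbitrary monotone $E$ via its faithfulness---whereas the paper simply cites Theorem~\ref{thm: meas. coh. ent. equality} for this direction; your version also makes transparent why the hypothesis $n>1$ is needed.
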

\begin{proof}
	We verify the condition for $ C $ being a quantum coherence monotone:
	\begin{enumerate}
		\item First, note that $ C(\cdot)\ge 0 $ due to $ E(\cdot)\ge 0 $. To show that $ C(\mathcal{N}_{A}) =0$ for $ \mathcal{N}_{A}\in \mathbf{I}(d,n)$, $ \mathbf{I}(d\times d,n\times n)\subset \sepm(A\!:\!B) $ proves the ``if'' direction, while Theorem \ref{thm: meas. coh. ent. equality} assures the other direction.
		\item For any $ \mathcal{F}_{A}\in \mathbf{UDI} $ and a classical post-processing channel $ \mathcal{S}_{R} $ acting on the system $ A $ and $ B $, the monotonicity holds as follows:
		\begin{align}
			&C(\mathcal{S}_{R}\circ \mathcal{N}_{A}\circ \mathcal{F}_{A}) \nonumber\\
			&=\sup_{\mathcal{G}_{AB}\in \mathbf{UDI}} E(\left(\mathcal{S}_{R}\circ \mathcal{N}_{A}\circ \mathcal{F}_{A} \right)\otimes \mathcal{E}_{B}\circ \mathcal{G}_{AB})\\
			&\le\sup_{\mathcal{F}_{AB}'\in \mathbf{UDI}} E(\mathcal{S}_{R}\otimes \mathsf{Id}_{B} \circ \mathcal{N}_{A}\otimes \mathcal{E}_{B}\circ \mathcal{F}_{AB}')\\
			&\le\sup_{\mathcal{F}_{AB}'\in \mathbf{UDI}} E(\mathcal{N}_{A}\otimes \mathcal{E}_{B}\circ \mathcal{F}_{AB}')\\
			&= C(\mathcal{N}_{A}),
		\end{align}
		where we used the monotonicity of $ E $ and that $ \mathcal{F}_{A}\otimes \mathsf{Id}_{B}\circ \mathcal{G}_{AB}\in \mathbf{UDI} $ for $ \mathcal{G}_{AB}\in \mathbf{UDI} $.
		\item The convexity of the dynamic coherence monotone can be seen as below:
		\begin{align}
			&C\left (\sum_{i}p_{i}\mathcal{N}_{A}^{(i)}\right )\nonumber\\
			&= E\left (\sum_{i}p_{i} \mathcal{N}_{A}^{(i)}\otimes \mathcal{E}_{B}\circ \mathcal{F}_{AB}^{\ast}\right )\\
			&\le \sum_{i}p_{i}E\left (\mathcal{N}_{A}^{(i)}\otimes \mathcal{E}_{B}\circ \mathcal{F}_{AB}^{\ast}\right )\\
			&\le \sum_{i}p_{i} \sup_{\mathcal{F}_{AB}^{(i)}\in \mathbf{UDI}}E\left (\mathcal{N}_{A}^{(i)}\otimes \mathcal{E}_{B}\circ \mathcal{F}_{AB}^{(i)}\right )\\
			&\le \sum_{i}p_{i}C\left (\mathcal{N}_{A}^{(i)}\right ),
		\end{align}
		where we assumed and used the convexity of $ E $ in the first inequality.
	\end{enumerate}
\end{proof}
We finally remark that a single outcome measurement ($ n=1 $) is the trivial measurement $ \mathcal{M}_{A}=\{I_{A}\} $ that does not have any quantum resources.

\end{document}